\newcommand{\R}{\mathbb{R}}
\newcommand{\Z}{\mathbb{Z}}
\renewcommand{\C}{\mathbb{C}}
\newcommand{\F}{\mathcal{F}}
\begin{document}


\title{Ising Model Partition Function Computation as a \texorpdfstring{\\}{} Weighted Counting Problem}


\author{Shaan A. Nagy}
\affiliation{Department of Computer Science, Rice University, Houston, TX, USA}
\author{Roger Paredes}
\affiliation{Department of Civil and Environmental Engineering, Rice University, Houston, TX, USA}
\author{Jeffrey M. Dudek}
\affiliation{Department of Computer Science, Rice University, Houston, TX, USA}
\author{Leonardo Due\~nas-Osorio}
\affiliation{Department of Civil and Environmental Engineering, Rice University, Houston, TX, USA}
\author{Moshe Y. Vardi}
\affiliation{Department of Computer Science, Rice University, Houston, TX, USA}




\begin{abstract}
While the Ising model remains essential to understand physical phenomena, its natural connection to combinatorial reasoning makes it also one of the best models to probe complex systems in science and engineering.  We bring a computational lens to the study of Ising models, where our computer-science perspective is two-fold: On the one hand, we consider the computational complexity of the Ising partition-function problem, or \#Ising, and relate it to the logic-based counting of constraint-satisfaction problems, or \#CSP. We show that known dichotomy results for \#CSP give an easy proof of the hardness of \#Ising and provide new intuition on where the difficulty of \#Ising comes from.
On the other hand, we also show that \#Ising can be reduced to Weighted Model Counting (WMC).  This enables us to take off-the-shelf model counters and apply them to \#Ising. We show that this WMC approach outperforms state-of-the-art specialized tools for \#Ising, thereby expanding the range of solvable problems in computational physics. 
\end{abstract}


\maketitle

\section{Introduction}
The Ising spin glass model is a fundamental tool in statistical mechanics to study many-body physical systems \cite{barahona1982computational,liu_tropical_2021}. 
One important property of an Ising model is its \emph{partition function} (also called its \emph{normalization probability}) \cite{tanaka1980analytic}.
Spin-glass models have also been recently leveraged in data science, complex systems, and related problems, e.g. community detection \cite{eaton2012spin}.


Similarly, weighted counting is a fundamental problem in artificial intelligence, with applications in probabilistic reasoning, planning, inexact computing, and engineering reliability \cite{Bacchus2003,DH07,naveh2007constraint,CaiComplex,GSS08}.
The task is to count the total weight, subject to a given weight function, of the solutions of a given set of constraints \cite{GSS08}.
For example, weighted counting is a core task for many algorithms that normalize or sample from probabilistic distributions (arising from graphical models, conditional random fields, and skip-gram models, among others \cite{KF09}).

From a complexity-theoretic view, both weighted counting and the computation of the Ising-model partition function lie in the complexity class \#P-Hard \cite{Valiant79,barahona1982computational}.
Although proving the difficulty of \#P-Hard problems is still open, it is widely believed that \#P-Hard problems are fundamentally hard.
More specifically, under standard assumptions (e.g., weighted counting instances with rational, log-linear weights \cite{CFMV15,DFM20}) both weighted counting and the Ising model partition function are \#P-Complete.
This means that both problems are in some sense equivalent: computing an Ising model partition function can be done through a single weighted counting query with only a moderate (meaning polynomial) amount of additional processing, and vice versa.

In this work, we demonstrate that weighted counting provides useful lenses through which to view the problem of computing the partition function of an Ising spin glass model. We focus here on two closely-related formalizations of weighted counting: as a \emph{weighted constraint satisfaction problem (w\#CSP)} \cite{CaiComplex}, and as a \emph{weighted model counting (WMC) problem} \cite{GSS08}.
The study of w\#CSPs has largely been from a complexity-theoretic perspective, and has led to deep results on the computational complexity of counting (e.g., Theorem~\ref{restatthm:wcsp}) \cite{CaiComplex}. 
On the other hand, the study of WMC problems has largely focused on the development of practical tools (called \emph{counters}) for solving solving WMC instances. This has resulted in a huge variety of counters \cite{SBK05,OD15,LM17,DDV19,dudek2020parallel,DPV20,dudek2020dpmc,DFM20,FHWZ18,FHZ19}, which, despite the computational difficulty of counting, have been used to solve large, useful applied problems in a variety of fields (e.g., informatics, infrastructure reliability, etc.)~\cite{fichte2020model}. Section~\ref{sec:prelim} introduces Ising models and weighted counting.

Through one lens, viewing the partition function of an Ising model as a w\#CSP gives us foundational insights into its computational complexity. In Section \ref{sec:theory}, we highlight how a famous w\#CSP dichotomy theorem \cite{CaiComplex} can be used to classify Ising models according to the complexity of ground-state energy landscapes.

Through another lens, viewing the partition function of an Ising model as a WMC problem allows us to compute the partition function in practice.
We show in Section \ref{sec:wmc} that mature, powerful off-the-shelf WMC counters can be used to compute the partition function. In particular, we find that the counter \tool{TensorOrder} \cite{DDV19,dudek2020parallel} outperforms a variety of other counters and traditional approaches to computing the partition function of Ising models, including direct tensor network approaches in computational physics \cite{pan2020contracting}.

In contrast to approximation methods for evaluating the partition function~\cite{pan2020contracting}, our work focuses on exact methods. This strategy offers insights into algorithmic strategies to make partition function quantification accessible and accountable. However, as the partition function is parameterized by the inverse temperature parameter, $\beta\geq 0$, the case of $\beta \to \infty$, known as the zero-temperature limit, remains challenging for both exact and approximate tasks, including ground state ground-state energy and residual entropy quantification.

Nonetheless, Liu et al.~\cite{liu_tropical_2021} recently showed that replacing the usual sum and product binary operators for ordinary real numbers with the max and sum operators, respectively, known as the tropical algebra, gives access to partition function computations in the zero-temperature limit. We believe that the computational complexity analyses and Ising reductions to weighted model counters offered in our study for finite $\beta$ in partition function computations, pave the way for future tropical algebra counters, especially as they rely on the common language of tensor network contractions~\cite{dudek2019efficient,dudek2020parallel}.  

We conclude in Section \ref{sec:conc}, and offer ideas for future research and development.

\label{sec:intro}
\section{Preliminaries}
\label{sec:prelim}
In this section, we give a versatile and formal definition of the \emph{Ising model}. In particular, we define the \emph{partition function} of an Ising model.

\subsection{Ising Models}
\label{sec:prelim:ising}

\begin{definition}
An \emph{Ising model} is a tuple $(\Lambda, J, h, \mu)$ where (using $\mathbb{R}$ to denote the real numbers):
\begin{itemize}
    \item $\Lambda$ is a set whose elements are called \emph{lattice sites},
    \item $J: \Lambda^2 \rightarrow \mathbb{R}$ is a function whose entry $J(i,j)$ is called the \emph{interaction} of $i, j \in \Lambda$,
    \item $h: \Lambda \rightarrow \mathbb{R}$ is a function, called the \emph{external field}, and
    \item $\mu \in \{-1, 1\}$.
\end{itemize}
A \emph{configuration} is a function $\sigma : \Lambda \rightarrow \{-1, 1\}$, which assigns a \emph{spin} (either $-1$ or $1$) to each lattice site. The (classical) \emph{Hamiltonian} is a function $H$ that assigns an \emph{energy} to each configuration, as follows:
\begin{equation}
    H(\sigma) \equiv - \sum_{i, j \in \Lambda} J(i,j) \sigma(i) \sigma(j) - \mu \sum_{j \in \Lambda} h(j) \sigma(j).
\end{equation}

\end{definition}

Notationally, $J$, $\sigma$, and $h$ accept arguments as subscripts and $\Lambda$ is implicit. Thus the Hamiltonian of an Ising model is written:
\begin{equation}
    H(\sigma) \equiv - \sum_{i, j} J_{ij} \sigma_i \sigma_j - \mu \sum_{j} h_j \sigma_j.
\end{equation}

An Ising model can be interpreted physically as a set of discrete magnetic moments, each with a ``spin'' of -1 or 1. The entries of $J$ indicate the strength of local interactions between two magnetic moments. The entries of $h$, together with $\mu$, indicate the strength and direction of an external magnetic field. It is also worth noting that the set of Ising Hamiltonians can be seen exactly as the set of degree-two polynomials over $\Lambda$ with no constant term.
A key feature of an Ising model is its set of \emph{ground states}, which are configurations that minimize the Hamiltonian (i.e., minimal-energy configurations). Finding a ground state of a given Ising model is famously NP-hard~\cite{barahona1982computational} (and a suitable decision variant is NP-complete). 
The connection between Ising models and Boolean satisfiability (SAT) has been especially important for the study of SAT phase transitions, where the behavior of randomized SAT problems can be understood through the behavior of \emph{spin-glass} models, which led to provable conjectures, c.f.~\cite{achlioptas2021number}.

An important problem for Ising models is the computation of the \emph{partition function}:
\begin{definition} \label{def:normalization}
The \emph{partition function} (or \emph{normalization probability}) of an Ising model $(\Lambda, J, h, \mu)$ with Hamiltonian $H$ at parameter $\beta \geq 0$ (called the \emph{inverse temperature}) is
\begin{equation}
    Z_\beta \equiv \sum_{\sigma : \Lambda \rightarrow \{-1, 1\}} e^{-\beta H(\sigma)}.
\end{equation}
\end{definition}
As its other name suggests, the partition function serves as a normalization constant when computing the probability of a configuration. These probabilities are given by a Boltzmann distribution: for a configuration $\sigma$, we have $P_\beta(\sigma) = \frac{e^{-\beta H(\sigma)}}{Z_\beta}$. 

There are several special cases of the Ising model that are studied in the literature. A few of the most common ones are:
\begin{itemize}
    \item \textbf{Ferromagnetic and Antiferromagnetic Ising Models} - The term ``Ising model'' is sometimes (especially historically) used to refer to the \emph{ferromagnetic Ising model}, which is the special case where all interactions are non-negative, i.e. where $J_{ij} \geq 0$ for all $i, j \in \Lambda$. Interactions where $J_{ij} < 0$ are called \emph{antiferromagnetic}.
    \item \textbf{2D Ising Models} - In a \emph{2D Ising model} the elements of $\Lambda$ are vertices of a 2D grid, and $J_{ij} = 0$ unless $i$ and $j$ are adjacent in the grid.
    \item \textbf{Sparse Interactions} - In many Ising models $J$ is sparse, i.e., most entries of $J$ are 0. 2D Ising models are an example of this. 
    \item \textbf{No External Field} - A common simplification is to assume that there is no external magnetic field, which means considering only Ising models where $h_i = 0$ for all $i \in \Lambda$.
    \item \textbf{No Self-Interaction} - In most studied cases $J_{ii} = 0$ for all $i \in \Lambda$. Interactions of a particle with itself only contribute a constant additive factor to the Hamiltonian.
    \item \textbf{Symmetric/Triangular Interactions} - We can always assume without loss of generality that, for all distinct $i, j \in \Lambda$, either $J_{ij} = 0$ or $J_{ji} = 0$. Similarly, we can instead assume without loss of generality that $J$ is symmetric. These assumptions are mutually exclusive unless $J$ is diagonal (which is uninteresting; see prior bullet).
\end{itemize}

Additionally, there is often interest in describing the physical connectivity of the lattices sites, as in 2D (planar) Ising models. This physical connectivity is sometimes called the \emph{topology} of the model, and information about a model's topology is useful in analyzing it. Such topologies may be represented as graphs, defined below.

\begin{definition} [(Simple) Graphs]
    A undirected graph (or, graph, for short)  is an pair $(V, E)$ where
    \begin{itemize}
        \item $V$ is a set whose elements are called \emph{vertices}.
        \item $E$ is a collection of sets of the form $\{v, w\}$, where $v$ and $w$ are distinct elements in $V$. Each set $\{v,w\}$ represents an \emph{edge} between vertices $v$ and $w$.
    \end{itemize}
\end{definition}
Typically, we restrict our attention to graphs that are \emph{finite} (i.e., $V$ is finite) and \emph{simple} (a vertex cannot have an edge with itself, and no two vertices can share more than one edge). 

\subsection{Weighted Counting Problems}
Often, computer scientists are interested in counting problems, problems involving counting the number of solutions in a solution space that satisfy certain constraints. For example, one might wish to know the number of satisfying solutions to a Boolean formula or the number of $3$-colorings for a particular graph. More than that, one might be interested in weighted counting problems, where one assigns weights to particular elements of a solution space and sums the weights over the solution space. Computation of the partition function of an Ising model is a good example of this case. We assign a weight ($e^{-\beta H(\sigma)}$) to each configuration of the model and sum these weights over all configurations to determine the partition function. This sort of problem is common across many fields and, in many cases, is computationally complex. 

\subsubsection{Weighted Counting Constraint Satisfaction Problems}
Many (but not all) weighted counting problems can be expressed as Weighted Counting Constraint Satisfaction Problems (w\#CSPs). Framing weighted counting problems as w\#CSPs introduces a standard form, 
which allows work done on the standard form to be applied to many different weighted counting problems. 
This is especially evident when analyzing the hardness of w\#CSPs which we discuss later (see Theorem~\ref{restatthm:wcsp}).

In the next definition we use $\mathbb{C}$ to denote the complex numbers and we use $\Z_+$ to denote the positive integers.
\begin{definition}[Constraint Language]
\label{def:constraint-language}
Let $D$ be a finite set and, for each $k \in \Z_+$, let $A_k$ be the set of all functions $F:D^k \rightarrow \mathbb{C}$.
Then a \emph{constraint language} $\F$ is a subset of $\bigcup\limits_{k \in \Z_+}A_k$.
\end{definition}

We write $\#CSP(\F)$ as the weighted \#CSP problem corresponding to $\F$.
\begin{restatable} [Weighted \#CSP - Instance and Instance Function] {restatdef}{wcspInstanceDef}
\label{restatdef:wcspInstanceDef}
Let $\F$ be a constraint language. Then an \emph{instance of $\#CSP(\F)$} is a pair $(I, n)$ where $n \in \Z_+$ and $I$ is a finite set of formulas each of the form $F(x_{i_1},\cdots,x_{i_k})$, where $F \in \F$ is a $k$-ary function and $i_1,\cdots,i_k \in [n]$ with each $x_{i_j}$ being a variable ranging over $D$. Note that $k$ and $i_1,\cdots,i_k$ may differ in each formula in $I$.\\
Given an instance $(I,n)$, we define the corresponding \emph{instance function} $F_I: D^n \rightarrow \C$ to be the following conjunction over $I$: 
\begin{equation}
    F_I(y) = \prod\limits_{F(x_{i_1},\cdots,x_{i_k}) \in I} F(y[i_1],\cdots,y[i_k]).
\end{equation}
The output of the instance $(I,n)$ is $Z(I) = \sum\limits_{y \in D^n} F_I(y)$.
\end{restatable}



As we will see, representing problems as w\#CSPs allows us access to a rich body of theoretical and computational work. In Section 3, we will see that converting a problem to a w\#CSP allows us to easily assess its computational complexity - how the number of operations needed to solve the problem scales with input size. Converting a problem to a w\#CSP also allows access to a slew of standardized computational tools. A common approach for computing the partition function of an Ising model is to represent the problem as a tensor network. Indeed, all w\#CSPs can be converted to tensor networks, so the tensor network approach can be motivated from this direction as well. 

\subsubsection{Weighted Model Counting}
\label{sec:prelim:wmc}

All weighted counting problems (and their w\#CSPs forms) can be reduced to weighted model counting (WMC). By converting a w\#CSP to WMC, we can take advantage of well-developed existing solvers which, in many cases, outperform problem-specific methods. Propositional model counting or the Sharp Satisfiability problem (\#SAT) consists of counting the number of satisfying assignments of a given Boolean formula. Without loss of generality, we focus on formulas in conjunctive normal form (CNF).
\begin{definition}[Conjunctive Normal Form (CNF)]\label{def:cnf}
A formula $\phi$ over a set $X$ of Boolean variables is in CNF when written as
\begin{equation}
	\phi = \bigwedge_{i=1}^{m} C_i =  \bigwedge_{i=1}^m \left( \bigvee_{j=1}^{k_i} l_{ij}\right),
\end{equation}
where every clause $C_i$ is a disjunction of $k_i\leq \vert X \vert$ literals, and every literal $l_{ij}$ is a variable in $X$ or its negation.
\end{definition}
Given a truth-value assignment $\tau:X\to \{0,1\}$ (such as a micro-state of a physical system), we use $\phi(\tau)$ to denote the formula that results upon replacing the variables $x\in X$ of $\phi$ by their respective truth-values $\tau(x)$, and say $\tau$ is a satisfying assignment of $\phi$ when $\phi(\tau)=1$. Thus, given an instance $\phi$ of \#SAT, one is interested in the quantity $\sum_{\tau\in[X]}\phi(\tau)$, where $[X]$ denotes the set of truth-value assignments.

Literal weighted model counting (WMC) is a generalization of {\#SAT} in which every truth-value assignment is associated to a real weight. Formally, WMC is defined as follows:

\begin{definition}[Weighted Model Counting]\label{def:wmc}
    Let $\phi$ be a formula over a set $X$ of Boolean variables, and let $W\colon X \times \{0,1\} \rightarrow \mathbb{R}$ be a function (called the weight function). Let $[X]$ denote the set of truth-value assignments $\tau\colon X \to \{0,1\}$. The weighted model count of $\phi$ w.r.t. $W$ is
    \begin{equation}
        W(\phi) \equiv \sum_{\tau \in [X]}\phi(\tau) \cdot \prod_{x \in X}W(x,\tau(x)).
    \end{equation}
    
\end{definition}
One advantage of WMC with respect to \#SAT is that it captures problems of practical interest such as probabilistic inference~\cite{sang2005performing} more naturally. Network reliability~\cite{duenas2017counting} and Bayesian inference~\cite{chavira2008probabilistic} are specific instances of probabilistic inference problems. Thus, the development of practical WMC solvers remains an active area of research, where algorithmic advances can enable the solution of difficult combinatorial problems across various fields.

In this paper we cast the Ising model partition function computation as a well understood problem of weighted model counting, which, in turn, enables its computation via actively developed off-the-shelf WMC solvers. Significantly, this paper gives empirical evidence that the runtime of \emph{exact} model counters vastly outperforms \emph{approximate} state-of-the-art physics-based tools that are currently used for partition function computations~\cite{agrawal2021partition}.

\section{Hardness and Relationship to Weighted Constraint Satisfaction}
\label{sec:theory}

It is well-known that computing the partition functions of Ising models is likely to be computationally intractcable, or as we explain below, \#P-hard~\cite{IsingSingleHardness}. In this section, we demonstrate how this can be easily derived from a formulation of partition-function computation as a w\#CSP. While the hardness of computing Ising-model partition functions is well-known, most proofs are extremely delicate and complex. The method we discuss below can be directly applied to many similar problems without the difficulties of a more classical proof (e.g., selecting an appropriate \#P-hard problem to reduce from, and constructing a clever reduction). Furthermore, we fold the discussion of hardness of computing the Ising-model partition functions into a much broader discussion of evaluating w\#CSPs, and from this we gain a better understanding of why partition functions are so hard to compute.

\subsection{Introduction to Computational Complexity}
\label{sec:theory:intro}

In this paper, we give a high-level overview of computational-complexity theory -- complexity theory, for short -- for counting problems. Readers interested in a more rigorous treatment of complexity theory might refer to Sipser's \textit{Introduction to the Theory of Computation}~\cite{sipserTOCIntro}.

One use of a general problem format such as w\#CSP is to be able to assess the computational complexity of a given problem. A problem is a mapping from problem instances to their associated outputs. When discussing problems, it is always important to be explicit about the set of instances over which a problem is defined (often called a problem's ``domain"). Hardness of a problem, as we see below, is a property of a problem's domain. It does not make sense to talk about the hardness of a problem instance, only the hardness of a problem, which consists of an infinite class of instances.

A problem's computational (time) complexity of a problem describes how the time (more formally the number of operations) required to solve a problem grows with the size of the input. 
Algorithms with low computational complexity are called \emph{scalable}, while algorithms with high computational complexity are not considered to be scalable.
Thus, there is often interest in knowing whether an algorithm with low computational complexity exists for a given problem~\cite{CLRS}.

A theoretical distinction is often made between problems that can be solved in polynomial time and those that are not believed to be solvable in polynomial time. There are many counting problems whose solvability in polynomial time is an open problem. The most famous example of such a problem is propositional model counting (\#SAT), which counts the number of assignments that satisfy a given formula in Boolean logic. Other examples include computing the number of maximum size cut sets of a graph (\#MAXCUT), computing matrix permanents for Boolean matrices and counting the number of perfect matchings of a bipartite graph~\cite{Valiant79}. In the study of counting problems, we often discuss problems in the complexity class $FP$ (Function Polynomial-Time) and problems which are $\#P$-hard. Problems in $FP$ are known to be solvable in polynomial time, and (for our purposes) problems which are $\#P$-hard are believed (but not known) to be impossible to solve in polynomial time~\cite{Valiant79}.

A useful tool in establishing membership of a problem in $FP$ or $\#P$-hard is the idea of polynomial equivalence. Problems that can be converted to one another in polynomial time are said to be polynomially equivalent. If two problems are polynomially equivalent, a polynomial-time algorithm for one problem can be used to construct a polynomial-time algorithm for the other problem. If a problem is polynomially equivalent to another problem in $FP$, both problems are in $FP$. Similarly, if a problem is polynomially equivalent to another problem that is $\#P$-hard, both problems are $\#P$-hard.


\subsection{The w\#CSP Dichotomy Theorem}
\label{sec:theory:dichot}
A landmark result in complexity theory is the following Dichotomy Theorem, which gives criteria for determining the complexity class of a w\#CSP~\cite{CaiComplex}. Every w\#CSP is either in $FP$, meaning that it can be solved in polynomial time, or $\#P$-hard, which for our purposes means it is not believed to be polynomially solvable. While the existence of this dichotomy is itself theoretically significant in that it helps us understand how computational hardness arises, this paper is interested in applying the theorem to determine the hardness of computing partition functions. Note also that the Dichotomy Theorem applies only to problems with finite constraint languages (Definition~\ref{def:constraint-language}). The conditions referenced in the following theorem are discussed below.

\begin{restatable} [Dichotomy Theorem for w\#CSPs~\cite{CaiComplex}] {restatthm}{wcsp}
\label{restatthm:wcsp}
Let $\F$ be a finite constraint language with algebraic complex weights. Then $\#CSP(\F)$ is polynomially solvable if the Block-Orthogonality Condition (Def \ref{condition:block-orth}), Mal'tsev Condition (Def \ref{condition:maltsev}), and Type Partition Condition hold (Def \ref{condition:type-partition}). Otherwise, $\#CSP(\F)$ is $\#P$-hard.
\end{restatable}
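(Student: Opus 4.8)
The plan is to prove the two directions separately and note that they are exhaustive and mutually exclusive, so the dichotomy follows: first the tractability claim (if the Block-Orthogonality, Mal'tsev, and Type Partition Conditions all hold then $\#CSP(\F) \in FP$), then the hardness claim (if any one of them fails then $\#CSP(\F)$ is $\#P$-hard). Throughout I would exploit the fact that $\F$ is finite, so that there is a single fixed list of constraint functions to analyze and the various structural parameters are all bounded by a constant.

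For the tractability direction I would build a polynomial-time algorithm in two layers. The Mal'tsev Condition guarantees that the support of the instance function $F_I$ --- the set of tuples $y \in D^n$ with $F_I(y) \neq 0$ --- is closed under a Mal'tsev term, hence (following the Bulatov--Dalmau theory of compact representations) admits a polynomial-size witness that can be computed incrementally as the constraints of $I$ are added. On top of this combinatorial skeleton, the Block-Orthogonality Condition forces the nonzero values of each constraint function, after a suitable gauge/holographic transformation, to split into blocks whose associated value-vectors are mutually orthogonal; this lets the weighted sum $\sum_{y} F_I(y)$ over the support be evaluated by a Gaussian-elimination-style recursion instead of by enumeration. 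The Type Partition Condition is what makes the two layers compatible: it partitions each coordinate's domain into "types" on which domain elements behave uniformly, so that the compact representation and the block decomposition refine consistently and the recursion factors cleanly through the type classes. Assembling these pieces yields the $FP$ algorithm; checking that the algebraic-complex-weight hypothesis makes all the arising linear algebra effective, and that the bookkeeping is genuinely polynomial, is long but mechanical.

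For the hardness direction I would argue contrapositively, assuming $\#CSP(\F) \in FP$ and deriving all three conditions. The engine is polynomial interpolation via gadget constructions: given a constraint function $F \in \F$ witnessing the failure of, say, Block-Orthogonality, one forms for each $t$ a gadget --- a small instance built from $t$ parallel copies of $F$ glued along shared variables --- whose value is a polynomial in $t$ whose coefficients encode the partition function of a strictly harder constraint language, one already known to be $\#P$-hard (e.g.\ counting proper colorings or evaluating a $0/1$ permanent). A Vandermonde argument recovers those coefficients from polynomially many evaluations of $\#CSP(\F)$, producing a polynomial-time reduction and hence a contradiction. One runs a separate such argument for the failure of each of the three conditions, using the Boolean-domain dichotomy and the abelian-group / Fourier structure theory as base cases to reduce each failure to a short list of canonical obstructions.

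The main obstacle, by a wide margin, is the tractability direction: one must show the three syntactic conditions are not merely necessary but \emph{jointly sufficient}, which amounts to proving that block-orthogonality is robust enough to survive the recursive decomposition imposed by the compact representation --- that no \emph{interference} between the weighted blocks of distinct constraints can escape the algorithm's accounting. Controlling this interaction, and verifying effectivity of every linear-algebraic step under the algebraic-weight hypothesis, is the technical heart of the argument and the reason the full proof (due to Cai and Chen~\cite{CaiComplex}) runs to many dozens of pages; a self-contained account is well beyond the scope we need here, so we will invoke it as a black box.
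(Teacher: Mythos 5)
This statement is not proved in the paper at all: it is Cai and Chen's dichotomy theorem, quoted with a citation to~\cite{CaiComplex}, and the paper only supplies motivating intuition for the three conditions (Block-Orthogonality bounding the number of equivalence classes $S_{[t,j]}$, Mal'tsev and Type-Partition making membership in them tractable) before applying the theorem as a black box. Your proposal does essentially the same thing—sketching the Cai--Chen strategy and then explicitly invoking their proof as a black box—so it matches the paper's treatment; the only caveat is that your ``contrapositive'' framing of the hardness direction (assume $\#CSP(\F)\in FP$ and derive the conditions) is logically off unless one assumes $FP\neq \#P$: the actual hardness direction shows directly that failure of a condition yields a reduction from a known $\#P$-hard problem, which is indeed the gadget-plus-interpolation argument you describe.
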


The statement of the w\#CSP Dichotomy Theorem and its conditions is rather complex, so it is useful to include some motivation. The three criteria given by the theorem follow naturally when we attempt to construct a polynomial-time algorithm for w\#CSPs. One reason that w\#CSPs are often computationally expensive to solve is that the effect of a single assignment to a given variable is hard to capture. In particular, given a problem instance $I$ and its associated n-ary formula $F_I(\boldsymbol{x})$, it is challenging to find a general rule that relates $F_I(x_1,x_2,\cdots,x_n)$ to $F_I(x'_1,x_2,\cdots,x_n)$. For this reason, current computational approaches require consideration of a set of assignments whose size is exponential in $n$. The algorithm driving the w\#CSP Dichotomy Theorem demands that changing the assignment of a particular variable changes the value of $F_I$ in a predictable way. The criteria of the Dichotomy Theorem are necessary and sufficient conditions for these assumptions to hold.

We now establish some definitions to help us describe the effect of an assignment on the value of $F_I$.
Recall the definition of an instance of a w\#CSP (Definition~\ref{restatdef:wcspInstanceDef}). An instance of $\#CSP(\F)$ is written as $(I,n)$ with $n$ a positive integer and $I$ a finite collection of formulas each of the form $F(x_{i_1},\cdots,x_{i_k})$. Here each $F \in \F$ is a $k$-ary function, $i_1,\cdots,i_k \in [n]$, and each variable $x_{i_j}$ ranges over $D$. Note that $k$ and $i_1,\cdots,i_k$ may differ in each formula in $I$. 
Given an instance $(I,n)$, we define $F_I: D^n \rightarrow \C$ as follows, where  $\boldsymbol{y}\in D^n$:
\begin{equation}
    F_I(\boldsymbol{y}) = \prod\limits_{F(x_{i_1},\cdots,x_{i_k}) \in I} F(y[i_1],\cdots,y[i_k]).
\end{equation}
So $F_I$ is a mapping that takes an assignment $\boldsymbol{y}$ to the variables ($\boldsymbol{x}$) and returns a value in $\C$, where the value assigned to each $x_{i_j}$ is given by $y[i_j]$. The output of the instance $(I,n)$ is the sum over all assignments $Z(I) = \sum\limits_{\boldsymbol{y} \in D^n} F_I(\boldsymbol{y})$, as in Definition~\ref{restatdef:wcspInstanceDef}. Note the similarity between this sum of products and our Ising partition function (Definition~\ref{def:normalization}).

Our first step is to break up the computation of $Z(I)$ so that we can understand the effect each variable assignment has on its value. We would like to be able to split $Z(I)$ into a sum of smaller sums, each restricted to a particular partial assignment. Then we can consider how differences between these partial assignments affect their contributions to $Z(I)$. 

In the development below, we take $t$ to be an arbitrary member of $\{1, \cdots, n\}$.
\begin{definition} [$F_I^{[t]}$]
Let $F_I^{[t]}: D^t \rightarrow \C$ be defined as 
\begin{equation}
    F_I^{[t]}(y_1, \cdots, y_t) = \sum\limits_{y_{t+1}, \cdots, y_n \in D} F_I(y_1, \cdots, y_n).
\end{equation}
\end{definition}
We can decompose $Z(I)$ into a sum of such terms. Thus $Z(I) = \sum\limits_{a \in D} F_I^{[1]}(a)$, and in general, we have that $Z(I) = \sum\limits_{\boldsymbol{y} \in D^t} F_I^{[t]}(\boldsymbol{y})$. 

We now consider $F_I^{[t]}$ as a $\abs{D}^{t-1} \times \abs{D}$ matrix, where $F_I^{[t]}(\boldsymbol{y},d) = F_I^{[t]}(y_1,\cdots,y_{t-1},d)$ for $\boldsymbol{y} \in D^{t-1}$ and $d\in D$:

\begin{align}
F_I^{[t]} &= \begin{bsmallmatrix}
F_I^{[t]}(\boldsymbol{y}^1, d_1) && \cdots && F_I^{[t]}(\boldsymbol{y}^1, d_{\abs{D}})\\
\vdots && \vdots && \vdots \\
F_I^{[t]}(\boldsymbol{y}^{\abs{D}^{t-1}}, d_1) && \cdots && F_I^{[t]}(\boldsymbol{y}^{\abs{D}^{t-1}}, d_{\abs{D}})\\
\end{bsmallmatrix}\\
&= \begin{bsmallmatrix}
\sum\limits_{\boldsymbol{w} \in D^{n-t}} F_I(\boldsymbol{y}^1, d_1, \boldsymbol{w}) && \cdots && \sum\limits_{\boldsymbol{w} \in D^{n-t}} F_I(\boldsymbol{y}^1, d_{\abs{D}}, \boldsymbol{w})\\
\vdots && \vdots && \vdots \\
\sum\limits_{\boldsymbol{w} \in D^{n-t}} F_I(\boldsymbol{y}^{\abs{D}^{t-1}}, d_1, \boldsymbol{w}) && \cdots && \sum\limits_{\boldsymbol{w} \in D^{n-t}} F_I(\boldsymbol{y}^{\abs{D}^{t-1}}, d_{\abs{D}}, \boldsymbol{w})\\
\end{bsmallmatrix}.
\end{align}

Writing $F_I^{[t]}$  as a matrix invites us to consider the following question: How does changing the assignment of a single variable change the contribution of a partial assignment to the instance's output $Z(I)$? In our notation, how does changing $d$ affect the value of $F_I^{[t]}(\boldsymbol{y}, d)$ for a given $\boldsymbol{y}$?
\begin{definition} [$F_I^{[t]}(\boldsymbol{y},\boldsymbol{\cdot})$]
We refer to the rows of the matrix $F_I^{[t]}$ by $F_I^{[t]}(\boldsymbol{y},\boldsymbol{\cdot})$ as below, with $\boldsymbol{y} \in D^{t-1}$ and $D = \{d_1, \cdots, d_{\abs{D}}\}$. Thus
\begin{equation}
F_I^{[t]}(\boldsymbol{y},\boldsymbol{\cdot}) = \begin{bsmallmatrix}
\sum\limits_{\boldsymbol{w} \in D^{n-t}} F_I(\boldsymbol{y}, d_1, \boldsymbol{w}), && \cdots, && \sum\limits_{\boldsymbol{w} \in D^{n-t}} F_I(\boldsymbol{y}, d_{\abs{D}}, \boldsymbol{w})\\
\end{bsmallmatrix}.
\end{equation}
\end{definition}

We now have a way to relate similar partial assignments. In particular, for each partial assignment $\boldsymbol{y}$, the vector $F_I^{[t]}(\boldsymbol{y},\boldsymbol{\cdot})$ captures the contributions to $Z(I)$ of partial assignments to the first $t$ variables that agree with $\boldsymbol{y}$ on the first $t-1$ variables. These vectors 
are now our objects of interest. 

The set of such vectors $F_I^{[t]}(\boldsymbol{y},\boldsymbol{\cdot})$ is quite large, since there are $\abs{D}^{t-1}$ choices for $\boldsymbol{y}$. We would like to reduce the computation and information needed to manage this set. We make the observation that if two vectors $F_I^{[t]}(\boldsymbol{y},\boldsymbol{\cdot})$ and $F_I^{[t]}(\boldsymbol{y'},\boldsymbol{\cdot})$ are scalar multiples of one another, then if we know $F_I^{[t]}(\boldsymbol{y},\boldsymbol{\cdot})$ and a non-zero entry of $F_I^{[t]}(\boldsymbol{y'},\boldsymbol{\cdot})$, we can compute $F_I^{[t]}(\boldsymbol{y'},\boldsymbol{\cdot})$ easily. This motivates us to introduce an equivalence relation on our set of partial assignments $\boldsymbol{y}$. Given $\boldsymbol{y}, \boldsymbol{y'} \in D^{t-1}$, we say that $\boldsymbol{y} \equiv \boldsymbol{y'}$ if 
the vectors
$F_I^{[t]}(\boldsymbol{y},\boldsymbol{\cdot})$ and $F_I^{[t]}(\boldsymbol{y'},\boldsymbol{\cdot})$ are scalar multiples of one another. This notion is formalized below in Definition~\ref{def:partial-assignment-partition}.

If we have a small number of equivalence classes and if it is easy to determine to which equivalence class a partial assignment $\boldsymbol{y}\in D^{t-1}$
belongs to, then using this equivalence relation would make it easier to compute and manage the $F_I^{[t]}(\boldsymbol{y},\boldsymbol{\cdot})$ vectors. We see below that the dichotomy criteria of Theorem~\ref{restatthm:wcsp} guarantee both these desiderata.



If we want our partition to be useful, we need to assign to each equivalence class a representative element. Then for each $\boldsymbol{y}$ in the equivalence class, we can compute
$F_I^{[t]}(\boldsymbol{y},\boldsymbol{\cdot})$  by scaling this representative. 
We construct such a vector of dimension $\abs{D}$ as follows.
\begin{definition} [$\boldsymbol{v}^{\boldsymbol{y}}$] 
\label{def:partial-assignment-representative}
Given $\boldsymbol{y} \in D^{t-1}$ and a total order on $D$, we define
\begin{equation}
    \boldsymbol{v}^{\boldsymbol{y}} = \frac{F_I^{[t]}(\boldsymbol{y}, \boldsymbol{\cdot})}{F_I^{[t]}(\boldsymbol{y}, a_t)}    
\end{equation}
for the least $a_t \in D$ so that $F_I^{[t]}(\boldsymbol{y}, a_t) \neq 0$. We say $\boldsymbol{v}^{\boldsymbol{y}} = 0$ if no such $a_t$ exists. \end{definition}

Observe that for an arbitrary $\boldsymbol{y} \in D^{t-1}$, we have that $F_I^{[t]}(\boldsymbol{y},\boldsymbol{\cdot}) = F_I^{[t]}(\boldsymbol{y}, a_t)\cdot\boldsymbol{v}^{\boldsymbol{y}}$. In addition, given $\boldsymbol{y}, \boldsymbol{y'} \in D^{t-1}$, if $y \equiv y'$ as discussed earlier, then $\boldsymbol{v}^{\boldsymbol{y}} = \boldsymbol{v}^{\boldsymbol{y'}}$. This motivates the following formalization of the equivalence relation discussed above.

\begin{definition} [$S_{[t,j]}$]
\label{def:partial-assignment-partition}
Given $\boldsymbol{y}, \boldsymbol{y'} \in D^{t-1}$, we say that $\boldsymbol{y}$ and $\boldsymbol{y'}$ are equivalent,
denoted $\boldsymbol{y} \equiv_t \boldsymbol{y'}$ if 
$\boldsymbol{v}^{\boldsymbol{y}} = \boldsymbol{v}^{\boldsymbol{y'}}$. We denote the equivalence classes induced by this equivalence relation $S_{[t,1]}, \cdots, S_{[t,m_t]}$. 
\end{definition}
We often use $\equiv$ instead of $\equiv_t$, when $t$ is clear from the context.

\begin{figure*}[hbt!]
	\centering
    \includegraphics[scale=0.50]{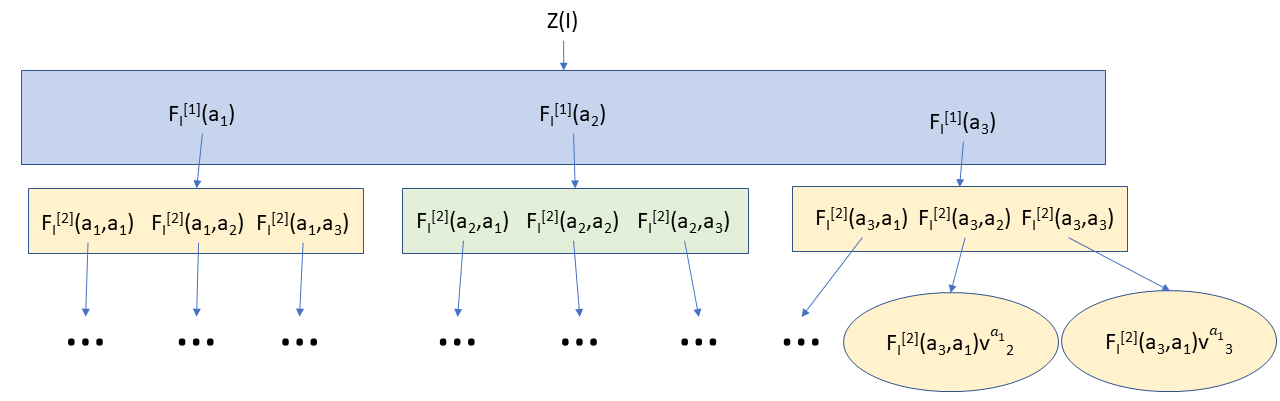}
    \caption{Diagram of contributions of partial assignments to $Z(I)$. The colors of each vector denote its equivalence class, thus $a_1 \equiv a_3$. When computing $F_I^{[2]}(a_3, a_2)$, we can use $\boldsymbol{v}^{a_1}$, which we already know from $F_I^{[2]}(a_1, \cdot)$. In this way, we need only check one partial assignment to 2 variables ($a_3, a_1$) to determine $F_I^{[2]}(a_3)$ when we otherwise would have needed to check three. As we work further down the tree, repeated applications of this shortcut result in an exponential speedup (assuming we have sufficiently few equivalence classes, guaranteed by the criteria of Theorem~\ref{restatthm:wcsp}).}
	\label{fig:CSPDiagram}
\end{figure*}
We now have a way to discuss symmetries between partial assignments in $D^{t-1}$. We can consider computing $Z(I)$ in this framework to understand what advantages we have gained. This computation can be visualized as in Figure~\ref{fig:CSPDiagram}. We regard $Z(I)$ as $F_I^{[0]}()$ and compute it recursively, traversing the tree in Figure~\ref{fig:CSPDiagram} in a depth-first traversal.

Consider the vector $\boldsymbol{v} = \frac{F_I^{[1]}(\boldsymbol{\cdot})}{F_I^{[1]}(a_1)} \in D^{\abs{D}}$, where $a_1$ is the least element of $D$ such that $F_I^{[1]}(a_1) \neq 0$ as in Definition~\ref{def:partial-assignment-representative}. We denote entries of $\boldsymbol{v}$ as $\boldsymbol{v}[a] = \frac{F_I^{[1]}(a)}{F_I^{[1]}(a_1)}$. Observe
\begin{align}
  Z(I) = F_I^{[0]}() &= \sum\limits_{a \in D} F_I^{[1]}(a)\\
       &= \sum\limits_{a \in D} F_I^{[1]}(a_1) \frac{F_I^{[1]}(a)}{F_I^{[1]}(a_1)} \\
       &= \sum\limits_{a \in D} F_I^{[1]}(a_1) \boldsymbol{v}[a] \\
       &= F_I^{[1]}(a_1) \sum\limits_{a \in D} \boldsymbol{v}[a].
\end{align}

So to compute $Z(I)$, one needs to compute $F_I^{[1]}(a_1)$ and $\boldsymbol{v}$.
We first compute $F_I^{[1]}(a_1)$. As per Definition~\ref{def:partial-assignment-representative}, consider the least $a_2 \in D$ 
so that $F_I^{[2]}(a_1, a_2) \neq 0$ and write $\boldsymbol{v}^{a_1} = \frac{F_I^{[2]}(a_1, \boldsymbol{\cdot})}{F_I^{[2]}(a_1,a_2)}$. Then 
\begin{align}
F_I^{[1]}(a_1) &= \sum\limits_{b \in D} F_I^{[2]}(a_1, b)\\
               &= \sum\limits_{b \in D} F_I^{[2]}(a_1, a_2) \frac{F_I^{[2]}(a_1,b)}{F_I^{[2]}(a_1,a_2)} \\
               &= \sum\limits_{b \in D} F_I^{[2]}(a_1, a_2) \left(\boldsymbol{v}^{a_1}[b]\right) \\
               &= F_I^{[2]}(a_1, a_2) \sum\limits_{b \in D} \boldsymbol{v}^{a_1}[b].
\end{align}
In order to finish computing $F_I^{[1]}(a_1)$, we must compute $F_I^{[2]}(a_1, a_2)$ and $\boldsymbol{v}^{a_1}$. We continue walking down our tree in this way until we reach the leaves where $F_I^{[n]} = F_I$ is easily computable. When we compute our $\boldsymbol{v}^{\boldsymbol{y}}$ vectors, we find each entry as per Definition~\ref{def:partial-assignment-representative} by visiting unexplored branches of the tree in a depth-first fashion. However, we can apply the  equivalence
relation $\equiv$ to avoid computing $\boldsymbol{v}^{\boldsymbol{y}}$ when it is known from earlier computation of an equivalent partial assignment. This speedup is explained next in the context of the top level of our tree.

After computing $F_I^{[1]}(a_1)$, we must compute $\boldsymbol{v}$. For each $a_1' \in D$ such that $a_1 \equiv a_1'$, we have that 
\begin{align}
\boldsymbol{v}[a_1'] &= \frac{F_I^{[1]}(a_1')}{F_I^{[1]}(a_1)}\\
&= \frac{1}{F_I^{[1]}(a_1)} F_I^{[2]}(a_1', a_2) \sum\limits_{b \in D} (\boldsymbol{v}^{a_1'}[b])\\
&= \frac{1}{F_I^{[1]}(a_1)} F_I^{[2]}(a_1', a_2) \sum\limits_{b \in D} (\boldsymbol{v}^{a_1}[b]).
\end{align}

Since we have already computed $\boldsymbol{v}^{a_1}$ in our computation of $F_I^{[1]}(a_1)$, we need only compute $F_I^{[2]}(a_1', a_2)$ to determine $\boldsymbol{v}[a_1']$. This reduces the number of partial assignments of length 2 we must explore to determine $\boldsymbol{v}[a_1']$ by a factor of $n$ (see Figure~\ref{fig:CSPDiagram}). As we proceed with our computation of various $\boldsymbol{v}^{\boldsymbol{y}}$ and $F_I^{[t]}(\cdot)$, the ability to reuse previously computed $\boldsymbol{v}$'s substantially reduces the search space we must explore (supposing the number $m_t$ of equivalence classes of $\equiv_t$ is small enough [$m_t << \abs{D}^{t-1}$ for each $t \in \{1,\cdots, n\}$]).

More generally, we may write
\begin{equation}
    Z(I) = F_I(a_1,\cdots,a_n) \prod\limits_{t \in \{1, \cdots, n\}}\left(\sum\limits_{b \in D}\boldsymbol{v}^{a_1, \cdots, a_{t-1}}[b]\right)
\end{equation}
and as we have just seen, the computation of each $\boldsymbol{v}^{a_1, \cdots, a_{t-1}}$ is made easier each time we have $(a_1, \cdots, a_{t-1}, d) \equiv (a_1, \cdots, a_{t-1}, d')$ for distinct $d, d' \in D$.\\

For this approach to be useful, we must be able to determine in polynomial time to which equivalence class $S_{[t,j]}$ a given $\boldsymbol{y}$ belongs (so that we can  reuse $\boldsymbol{v}^{\boldsymbol{y}}$). The Dichotomy Theorem criteria are precisely the conditions required for the collection of these data to be determined in polynomial time.

While the complete formulation of the Dichotomy Theorem for w\#CSPs is for complex-valued constraints, we simplify the statements here to cover only the real case. Interested readers can refer to~\cite{CaiComplex} for coverage of the general complex case or to~\cite{CaiNonNegative} for the simpler non-negative case. First, we go over the three criteria of the Dichotomy Theorem.

The \emph{Block-Orthogonality Condition} guarantees that the number $m_t$  of equivalence classes of $\equiv_t$ is not too large.
\begin{definition} [Block-Orthogonal]
 Consider two vectors $\boldsymbol{a}$ and $\boldsymbol{b} \in \R^k$, and define $\boldsymbol{\abs{a}} = (\abs{a_1}, \cdots, \abs{a_k})$ and $\boldsymbol{\abs{b}} = (\abs{b_1}, \cdots, \abs{b_k})$. Then $\boldsymbol{a}$ and $\boldsymbol{b}$ are said to be block orthogonal if the following hold:
 \begin{itemize}
     \item $\boldsymbol{\abs{a}}$ and $\boldsymbol{\abs{b}}$ are linearly dependent (i.e. they are scalar multiples of one another).
     \item For every distinct value $a \in \{\abs{a_1}, \cdots, \abs{a_k}\}$, letting $T_a = \{j \in [k] : \abs{a}_j = a\}$ be the set of all indices $j$ on which $\abs{a}_j = a$, we have $\sum\limits_{j \in T_a}a_j b_j = 0$.
 \end{itemize}
\end{definition}

Note that if two vectors $\boldsymbol{a}$ and $\boldsymbol{b} \in \R^k$ are block orthogonal, then $\abs{\boldsymbol{a}}$ and $\abs{\boldsymbol{b}}$ are linearly dependent, but $\boldsymbol{a}$ and $\boldsymbol{b}$ need not be.

\begin{definition} [Block-Orthogonality Condition] \label{condition:block-orth} We say that a constraint language $\F$ satisfies the block orthogonality condition if, for every function $F \in \F$, $t \in [n]$, and $\boldsymbol{y},\boldsymbol{z} \in D^{t-1}$, the row vectors $F^{[t]}(\boldsymbol{y},\boldsymbol{\cdot})$ and $F^{[t]}(\boldsymbol{z},\boldsymbol{\cdot})$ are either block orthogonal or linearly dependent.


    
\end{definition}
 

The Dichotomy Theorem for w\#CSPs (Theorem~\ref{restatthm:wcsp}) requires two other criteria: the Mal'tsev Condition and the Type-Partition Condition. These conditions make computing membership in $S_{[t,j]}$ tractable. They are reviewed in Appendix~\ref{appendix:dichot} and more thoroughly in~\cite{CaiComplex}, but they are unnecessary for the remainder of the paper.


When possible, representing an arbitrary weighted counting problem (such as Ising partition function computation) as a w\#CSP  allows us to more easily assess its hardness, and indeed this analysis can produce a polynomial-time algorithm to solve the problem when it is found to be in \textit{FP}. It should be noted that the question of determining whether or not a problem with complex (or even real-valued) weights meets or fails these dichotomy criteria is not known to be decidable~\cite{CaiComplex}. While the Block-Orthogonality Condition is easy to check, the Mal'tsev and Type-Partition Conditions, which impose requirements on all $F_I$'s, are not.
In the non-negative real case, there is a separate but analogous dichotomy theorem, which is known to be decidable~\cite{CaiNonNegative}. Regardless, in many cases it is not hard to determine manually whether a given problem instance satisfies the criteria. An example in the case of the Ising problem can be found in Lemma~\ref{lemma:FaNotBlock}.

\subsection{Ising as a w\#CSP}
For Ising models, it is not hard to reduce computation of the partition function to a w\#CSP. In doing so, however, we must be explicit about the instances over which such a formulation is defined. The Ising problem corresponding to $\#CSP(\text{Ising})$ includes all instances whose 
topologies are (finite) simple graphs at any temperature. 
Note that there is no requirement that $h = 0$. 

\begin{definition} [$\#CSP(\text{Ising})$]
Let $A_1$ be the set of functions $\{f: \{-1,1\} \rightarrow \R \mid f(a) = \lambda^{a} \mid \lambda > 0\}$, and let $A_2$ be the set of functions $\{f: \{-1,1\}^2 \rightarrow \R \mid f(a,b) = \lambda^{ab} \mid \lambda > 0\}$. Take $\F = A_1 \cup A_2$. Then define $\#CSP(\text{Ising}) = \#CSP(\F)$.
\end{definition}
The functions in $A_1$ correspond to the effect of 
the external field ($h$) on the lattice sites, 
and the functions in $A_2$ correspond to interactions between lattice sites. 

Often we restrict our constraint language to only include rational-valued functions. This makes all the values we deal with finitely representable, which is required for our problem to be computable. For simplicity, we omit this detail from our definition of $\#CSP(\text{Ising})$. Appropriately modified versions of all subsequent claims still hold when we make this restriction (esp. Lemma~\ref{lemma:polyEquiv} and Theorem~\ref{thm:IsingHard}). In real-world computation, values are expressed as floating points~\cite{FPArithmetic}, and we accept the small inaccuracies that entails. Thus, the instances we are interested in in practice are rational-valued. Readers interested in attempts to eliminate floating point inaccuracies might refer to~\cite{InfPrecisionArithmetic}.

Given the structural similarities between computation of an Ising partition function and $\#CSP(\text{Ising})$, the equivalence of these problems should not be surprising. Recall that polynomial equivalence is discussed at the end of Section~\ref{sec:theory:intro}.

\begin{lemma}
\label{lemma:polyEquiv}
$\#CSP(\text{Ising})$ is  polynomially equivalent to the following problem: Given an inverse temperature and an Ising model whose topology is a (finite) simple graph, compute the associated partition function. 
\end{lemma}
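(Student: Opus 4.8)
The plan is to establish the polynomial equivalence in two directions, each via an explicit and efficiently computable translation. Call the target problem \textsc{IsingPF}: given an inverse temperature $\beta \geq 0$ and an Ising model $(\Lambda, J, h, \mu)$ whose topology is a finite simple graph, output $Z_\beta$.

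First I would reduce \textsc{IsingPF} to $\#CSP(\text{Ising})$. Given $(\beta, \Lambda, J, h, \mu)$, set $n = |\Lambda|$ and identify the variables $x_1, \dots, x_n$ with the lattice sites; the domain is $D = \{-1,1\}$. For each pair $i \neq j$ with $J_{ij} \neq 0$, expanding $e^{-\beta H(\sigma)}$ as a product shows the pair contributes a factor $e^{\beta J_{ij} \sigma_i \sigma_j} = \lambda^{\sigma_i \sigma_j}$ with $\lambda = e^{\beta J_{ij}} > 0$, so we add the formula $F(x_i, x_j)$ with $F \in A_2$; self-interaction terms $J_{ii}$ contribute only a global constant (noted in the preliminaries) which we multiply back in at the end. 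For each site $j$ with $h_j \neq 0$, the field contributes $e^{\beta \mu h_j \sigma_j} = \lambda^{\sigma_j}$ with $\lambda = e^{\beta \mu h_j} > 0$, so we add $F(x_j)$ with $F \in A_1$. Then $F_I(\sigma) = e^{-\beta H(\sigma)}$ up to the constant self-interaction factor, and hence $Z(I)$ equals $Z_\beta$ up to that same easily-computed multiplicative constant. All of this is polynomial in the input size, and one $\#CSP(\text{Ising})$ query (plus one multiplication) recovers $Z_\beta$.

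Conversely I would reduce $\#CSP(\text{Ising})$ to \textsc{IsingPF}. Given an instance $(I, n)$ over $\F = A_1 \cup A_2$, take $\Lambda = \{1, \dots, n\}$ and $\mu = 1$, and fix $\beta = 1$. For each formula $F(x_i, x_j) \in I$ with $F(a,b) = \lambda^{ab}$, set (additively accumulate) $J_{ij} \mathrel{+}= \ln \lambda$; for each formula $F(x_j) \in I$ with $F(a) = \lambda^{a}$, set $h_j \mathrel{+}= \ln \lambda$. Repeated formulas simply add into the same entry, which is consistent because $\lambda_1^{ab}\lambda_2^{ab} = (\lambda_1\lambda_2)^{ab}$. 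Then $e^{-H(\sigma)} = \prod_{F(\dots)\in I} F(\sigma[\dots]) = F_I(\sigma)$ exactly, so the partition function of this model at $\beta = 1$ equals $Z(I)$. Again the construction is polynomial (the only subtlety being that $\ln\lambda$ is transcendental in general, which the paper already flags as the reason one usually restricts to rational-valued constraint languages — in the rational setting one instead carries $\lambda$ symbolically or works with the ``appropriately modified'' statement the excerpt promises).

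I expect the main obstacle to be bookkeeping about representation rather than anything mathematically deep: making sure the translation of weights respects the finite-representability requirement (hence the caveat about rational versus real/transcendental values), handling the additive accumulation of repeated edges and the diagonal $J_{ii}$ correctly, and confirming that "topology is a finite simple graph" is preserved — in the forward direction the graph is just $(\Lambda, \{\{i,j\} : J_{ij}\neq 0 \text{ or } J_{ji}\neq 0\})$, and in the backward direction the formulas of $I$ directly induce such a graph. Once the dictionary $\lambda \leftrightarrow e^{\beta(\cdot)}$ between multiplicative constraint weights and additive Hamiltonian coefficients is set up, both directions are immediate term-by-term identities of the two sum-of-products expressions, so I would present the equivalence as two short lemmas matching $F_I$ with $e^{-\beta H}$ and then reading off the equality of the output sums.
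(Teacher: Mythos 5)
Your proposal is correct and follows essentially the same route as the paper: translate multiplicative constraint weights $\lambda^{ab}$, $\lambda^{a}$ to and from the Boltzmann factors $e^{\beta J_{ij}\sigma_i\sigma_j}$, $e^{\beta\mu h_j\sigma_j}$, and observe that $F_I$ matches $e^{-\beta H}$ term by term so the output sums coincide. You simply spell out the reverse direction (accumulating repeated formulas into $J$ and $h$ at $\beta=1$) and the representability caveat in more detail than the paper, which handles these with a brief remark about ``the reverse construction'' and its earlier discussion of rational-valued restrictions.
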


\begin{proof}
Given an Ising Model $(\Lambda, J, h, \mu)$ and an inverse temperature $\beta \geq 0$, we set
\begin{itemize}
    \item $F_{i,j} \in \F$ so that $F_{i,j}(1,1) = F_{i,j}(-1,-1) = e^{\beta J_{i,j}}$ and $F_{i,j}(1,-1) = F_{i,j}(-1,1) = e^{-\beta J_{i,j}}$.
    \item $F_i \in \F$ so that $F_i(1) = e^{\beta \mu h_i}$ and $F_i(-1) = e^{-\beta \mu h_i}$.
\end{itemize}

Then we have an instance $(I,n)$ of $\#CSP(\text{Ising})$ where $n = \abs{\Lambda}$ and $I = \{F_{i,j}(i,j): i,j \in \Lambda\} \cup \{h_i(i): i \in \Lambda\}$.
It is easy to see that $Z(I) = Z_\beta$; the solution to the instance of $\#CSP(\text{Ising})$ is equal to the partition function at the inverse temperature $\beta$.

It is also easy to see that any instance of $\#CSP(\text{Ising})$ can be converted (in polynomial time) to an instance of partition function computation of an Ising model by the reverse construction. \end{proof}Thus $\#CSP(\text{Ising})$ and computing partition functions of Ising models are polynomially equivalent problems. So, we can easily determine the hardness of the problem of computing the partition function of the Ising model by applying the Dichotomy Theorem to $\#CSP(\text{Ising})$. However, the Dichotomy Theorem for w\#CSP applies only to finite constraint languages. We have defined $\#CSP(\text{Ising})$ with an infinite constraint language, so we cannot directly apply the Dichotomy Theorem to $\#CSP(\text{Ising})$.

To discuss hardness in a meaningful way, we consider a simple finite subproblem of $\#CSP(\text{Ising})$ (i.e. a w\#CSP whose constraint language is a finite subset of that of $\#CSP(\text{Ising})$). If we show that this subproblem is \#P-hard, then $\#CSP(\text{Ising})$ must also be \#P-hard. Intuitively, if $\#CSP(\text{Ising})$ is believed to have no polynomial time solution on a subset of its valid instances, then there should be no solution for $\#CSP(\text{Ising})$ that is polynomial time on the entire set of valid problem instances. In general, if a subproblem of a problem is \#P-hard, then the problem itself is \#P-hard as well.

For example, one can consider weighted model counting of Boolean formulas in CNF (Definition~\ref{def:cnf}) as a subproblem of WMC (Definition~\ref{def:wmc}). Weighted model counting of CNF formulas can be shown to be \#P-hard, and from this fact we can determine that WMC is \#P-hard as well. Note that the converse does not hold; not all subproblems of a \#P-hard problem are guaranteed to be \#P-hard.

The instances of our subproblem of $\#CSP(\text{Ising})$ correspond to Ising models with no external field ($h=0$) whose interactions $J$ are integer multiples of some positive $a \neq 1$. We construct this subproblem as follows:
Let $a' \neq 1$ be a positive rational number, and define $a = \log{a'}$. Define $\F_a = \{f_a\}$, where $f_a: \{-1,1\}^2 \rightarrow \R$ is 
\begin{equation}
    f_a(x,y) = \begin{cases} e^a & x=y\\
e^{-a} & x \neq y
\end{cases}.
\end{equation}
Clearly, $\F_a$ is a subset of the constraint language used to define $\#CSP(\text{Ising})$, so $\#CSP(\F_a)$ is a subproblem of $\#CSP(\text{Ising})$. 

\begin{lemma}\label{lemma:FaNotBlock}
$\#CSP(\F_a)$ does not satisfy the Block-Orthogonality Condition
\end{lemma}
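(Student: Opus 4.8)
The plan is to refute the Block-Orthogonality Condition (Definition~\ref{condition:block-orth}) directly for the one-element constraint language $\F_a$, by exhibiting a single offending pair of rows. Because $f_a$ is binary and $D=\{-1,1\}$, the only case that can fail is $t=2$: for $t=1$ the matrix $f_a^{[1]}$ has just one row, indexed by the empty tuple in $D^{0}$, so there is nothing to compare, and $f_a^{[t]}$ does not apply for $t>2$. For $t=2$ the reduced function $f_a^{[2]}$ is $f_a$ itself, viewed as the $2\times 2$ matrix whose row for $\boldsymbol{y}=y_1\in D^{1}$ is $(f_a(y_1,-1),\,f_a(y_1,1))$. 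So first I would write the two rows explicitly, using the order $-1<1$ on $D$:
\[
\boldsymbol{u}=f_a^{[2]}(-1,\boldsymbol{\cdot})=(e^{a},e^{-a}),\qquad \boldsymbol{w}=f_a^{[2]}(1,\boldsymbol{\cdot})=(e^{-a},e^{a}).
\]
The one fact that does all the work is that $a=\log a'\neq 0$, since $a'$ is a positive rational with $a'\neq 1$; hence $e^{a}\neq e^{-a}$ and all four entries are strictly positive.

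Next I would check that neither disjunct of ``block orthogonal or linearly dependent'' holds for $\boldsymbol{u}$ and $\boldsymbol{w}$. If $\boldsymbol{w}=\lambda\boldsymbol{u}$ for a scalar $\lambda$, the first coordinate gives $\lambda=e^{-2a}$ and the second gives $\lambda=e^{2a}$, so $e^{4a}=1$ and $a=0$ --- a contradiction; thus $\boldsymbol{u}$ and $\boldsymbol{w}$ are linearly independent. For block orthogonality, since all entries are positive we have $\abs{\boldsymbol{u}}=\boldsymbol{u}$ and $\abs{\boldsymbol{w}}=\boldsymbol{w}$, so the first requirement in the definition of block-orthogonal vectors --- that $\abs{\boldsymbol{u}}$ and $\abs{\boldsymbol{w}}$ be linearly dependent --- is exactly the relation just ruled out, and hence fails. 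So $f_a^{[2]}(-1,\boldsymbol{\cdot})$ and $f_a^{[2]}(1,\boldsymbol{\cdot})$ are neither block orthogonal nor linearly dependent, which is precisely the assertion that $\F_a$ violates Definition~\ref{condition:block-orth}. (This is the step that will later let us apply Theorem~\ref{restatthm:wcsp} to conclude that $\#CSP(\F_a)$, and therefore $\#CSP(\text{Ising})$, is $\#P$-hard.)

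I do not expect any real obstacle: this reduces to a two-line $2\times2$ computation. The only things to be careful about are the bookkeeping in Definition~\ref{condition:block-orth} --- identifying $t=2$ as the only nontrivial slice and the two rows as the two sign patterns of $f_a$ --- and making explicit that the hypothesis $a'\neq1$ is genuinely used: it is exactly what forces $a\neq0$ and keeps the matrix nondegenerate, for if $a=0$ then $f_a$ is constant, both rows coincide, and block orthogonality (indeed linear dependence) would hold trivially.
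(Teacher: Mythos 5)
Your proof is correct and follows essentially the same route as the paper's: both exhibit $f_a^{[2]}$ as the $2\times 2$ matrix with entries $e^{\pm a}$, note its rows are not linearly dependent (using $a\neq 0$), and observe that since all entries are positive the absolute-value vectors are likewise not linearly dependent, so block orthogonality fails too. Your added bookkeeping (dismissing $t=1$, and making explicit that $a'\neq 1$ is what forces $a\neq 0$) is a nice touch but does not change the argument.
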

\begin{proof} We will show that $\#CSP(\F_a)$ does not satisfy the Block-Orthogonality Condition, and thus that it is \#P-hard. Consider $F^{[2]}$ as a matrix, that is:
\begin{equation}
F^{[2]} = \begin{bmatrix}e^a & e^{-a}\\ e^{-a} & e^{a}\end{bmatrix}.
\end{equation}
Clearly, the rows of $F^{[2]}$ are not linearly dependent ($\frac{e^a}{e^{-a}} = e^{2a} \neq e^{-2a} = \frac{e^{-a}}{e^{a}}$). Furthermore, $\abs{F^{[2]}(0)}$ and $\abs{F^{[2]}(1)}$ are not linearly dependent, so the rows of $F^{[2]}$ are not block-orthogonal. We conclude that $\#CSP(\F_a)$ does not satisfy the Block-Orthogonality Condition.\end{proof} Since $\#CSP(\F_a)$ does not satisfy the Block-Orthogonality Condition, it is \#P-hard  by our earlier Dichotomy Theorem. The next theorem follows.

\begin{theorem}
\label{thm:IsingHard}
$\#CSP(\text{Ising})$ is $\#P$-hard.
\end{theorem}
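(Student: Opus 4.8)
The plan is to observe that the substantive work has already been done, so that the theorem follows by chaining two facts: a finite sub-language witnesses hardness, and hardness of a subproblem propagates upward. First I would appeal to Lemma~\ref{lemma:FaNotBlock}, which exhibits a constraint language $\F_a = \{f_a\}$ for which $\#CSP(\F_a)$ fails the Block-Orthogonality Condition (Def~\ref{condition:block-orth}). Since $\F_a$ is a \emph{finite} constraint language whose weights $e^a = a'$ and $e^{-a} = 1/a'$ are rational (hence algebraic), the Dichotomy Theorem for w\#CSPs (Theorem~\ref{restatthm:wcsp}) applies directly to $\#CSP(\F_a)$. Because one of the three conditions sufficient for polynomial solvability fails, the dichotomy forces $\#CSP(\F_a)$ onto its hard side, so $\#CSP(\F_a)$ is $\#P$-hard.

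Second, I would check that $\F_a$ is a sub-language of the language defining $\#CSP(\text{Ising})$: taking $\lambda = a' > 0$ in the definition of $A_2$ yields exactly $f_a$, so $\F_a \subseteq A_1 \cup A_2 = \F$. Consequently every instance of $\#CSP(\F_a)$ is verbatim an instance of $\#CSP(\text{Ising})$ with the same output, i.e., $\#CSP(\F_a)$ is a subproblem of $\#CSP(\text{Ising})$ via the identity embedding of instances. A polynomial-time algorithm for $\#CSP(\text{Ising})$ would therefore restrict to a polynomial-time algorithm for $\#CSP(\F_a)$; since no such algorithm exists (under the standard assumption $FP \neq \#P$), $\#CSP(\text{Ising})$ is $\#P$-hard.

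The one point I would be careful about is the applicability of the Dichotomy Theorem: it is stated only for \emph{finite} constraint languages, and $\#CSP(\text{Ising})$ has an infinite one, which is precisely why the argument must route through $\F_a$ rather than invoking Theorem~\ref{restatthm:wcsp} on $\#CSP(\text{Ising})$ directly. I would also make sure the subproblem implication is used in the correct direction --- enlarging the set of instances can only preserve or increase hardness, so $\#P$-hardness of a subproblem transfers up to the full problem (the converse, which fails in general, is not needed) --- and that the witnessing reduction is the trivial inclusion of instances, so no auxiliary polynomial-time processing has to be accounted for. With these checks in place the proof is essentially a one-line deduction from Lemma~\ref{lemma:FaNotBlock}.
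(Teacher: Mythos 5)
Your proposal matches the paper's argument: the paper likewise derives Theorem~\ref{thm:IsingHard} by noting that $\F_a\subseteq\F$ is finite, invoking Lemma~\ref{lemma:FaNotBlock} together with the Dichotomy Theorem~\ref{restatthm:wcsp} to conclude $\#CSP(\F_a)$ is $\#P$-hard, and then lifting hardness from the subproblem to $\#CSP(\text{Ising})$ via the trivial inclusion of instances. Your added cautions (finiteness/algebraicity of the weights in $\F_a$, and the direction of the subproblem implication) are exactly the points the paper relies on, so the proof is correct and essentially identical.
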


To summarize, we first saw that computing the partition function of Ising models whose interactions are integer multiples of some positive rational $a \neq 1$ is \#P-hard. We then determined that computing the partition function of Ising models in general is \#P-hard. This result followed immediately from a straightforward application of the Dichotomy Theorem for w\#CSPs rather than a more laborious bespoke reduction. Furthermore, our application of the Dichotomy Theorem provides some intuition on where the difficulty of Ising partition function computation comes from (i.e., the difficulty in relating the change of a lattice site's spin to a change in probability across configurations).

Insofar as hardness is concerned, we might also consider the case where we restrict Ising models' interactions to take on values of either $a$ or $0$. In many physical systems of interest, only a single type of interaction with a fixed strength can occur (e.g. in models of ferromagnetism with only nearest-neighbor interactions)~\cite{IsingOverview}. This formulation of the Ising problem is unfortunately not easily expressible as a w\#CSP, since constraint functions can be applied arbitrarily many times to the same inputs. While the w\#CSP approach is not applicable here, it is known that even in this restricted case computing the Ising partition function is \#P-hard. This follows from a reduction of \#MAXCUT to polynomially many instances of Ising model partition function computation~\cite{IsingSingleHardness}. This reduction from \#MAXCUT gives a stronger statement than our application of the w\#CSP Dichotomy Theorem, but it requires much more work than applying an out-of-the-box theorem. Specifically, the instances of Ising partition function computation are used to approximate the integer-valued solution to \#MAXCUT, and the solution to \#MAXCUT is recovered once the error of the approximate is sufficiently small ($<0.5$).

There is also often interest in hardness when we restrict our attention to planar Ising models~\cite{barahona1982computational}. This case too is not easily represented as a w\#CSP, since w\#CSPs do not give us control over problem topologies. However, it is known that the partition functions of planar Ising models are polynomially computable~\cite{barahona1982computational}. The typical approach is to reduce the problem of computing the partition function to a weighted perfect matching problem and apply the Fisher-Kasteleyn-Temperley (FKT) Algorithm. This solution is given in great detail in~\cite{barahona1982computational}. 

The inability of w\#CSP to capture these Ising subproblems demonstrates the limits of the w\#CSP framework.
There has been recent work on \emph{Holant Problems}, a class of problems broader than w\#CSP, which adds the ability to control how many times variables may be reused~\cite{CaiHolant}. The Holant framework still does not allow the degree of specificity needed to capture these Ising formulations, although it is perhaps a step in the right direction. While these problem classes and their accompanying dichotomy theorems are intended to answer deep theoretical questions about the source of computational hardness, applying them to real-world problems gives very immediate and useful hardness results. Application to real-world problems also suggests where the expressiveness of problem classes like w\#CSP can be improved. As the expressiveness of these problem classes increases, so does our ability to determine the hardness of precisely defined subproblems~\cite{CaiHolant}.
\section{Weighted Model Counting}
\label{sec:wmc}

Every instance of a weighted counting constraint satisfaction problem (w\#CSP, Def.~\ref{restatdef:wcspInstanceDef}) can be reduced to an equivalent instance of weighted model counting (WMC, Def.~\ref{def:wmc}), where the domain of variables is restricted to the Boolean domain and the formulas are expressed in conjunctive normal form (CNF, Def.~\ref{def:cnf}). Appendix~\ref{appendix:wcsp2wmc} gives a general reduction from w\#CSP to WMC. This reduction to CNF WMC can be advantageous, as there are mature and dedicated WMC solvers that can handle very large CNF formulas~\cite{fichte2020model}. Therefore, a profitable alternative to direct computation of a weighted counting constraint-satisfaction
problem (e.g., Ising partition function computation) is to develop a succinct reduction from the w\#CSP problem of interest to WMC.

The remainder of this section reduces the problem of computing the partition function of the Ising model to one of weighted model counting. We then apply a diverse suite of WMC solvers and empirically evaluate their performance against state-of-the-art problem-specific tools.

\subsection{Ising model partition function as WMC}
We now reduce the problem of computing the partition function of the Ising model, $Z_{\beta}$~(Def.~\ref{def:normalization}), to one of weighted model counting~(Def.~\ref{def:wmc}). Formally, our reduction takes an Ising model and constructs an instance $W(\phi)$ of weighted model counting such that $Z_\beta=W(\phi)$. The reduction consists of two steps. In Step~1, we construct a CNF formula, $\phi$, over a set $X$ of Boolean variables, such that the truth-value assignments $\tau\in\{0,1\}^X$ that satisfy the formula correspond exactly to the configurations $\sigma\in\{\pm 1\}^\Lambda$ of the Ising model. In Step~2, we construct a literal weight function, $W$, such that for every satisfying assignment, $\tau$, the product of literal weights, $\prod_{x\in X} W(x,\tau(x))$, is equal to the Boltzmann weight of the corresponding Ising model configuration $\sigma$. Together, Step~1 and Step~2 ensure that $Z_\beta=W(\phi)$, which we make explicit at the end of this section. We next show the details of the two steps.

\subsubsection{Step 1 of 2: The Boolean formula in CNF}
To construct the Boolean formula $\phi$ in CNF, first introduce the set $X$ of Boolean variables $x_{i}$, with $i\in \Lambda$, and Boolean  variables $x_{ij}$, with $i,j \in \Lambda$ and $i\neq j$~\footnote{{Whenever $J_{ij}=J_{ji}=0$, we can ignore variable $x_{ij}$ for $i\neq j$}}. We use the variable $x_i$ to represent the spin state $\sigma_i\in\{\pm1\}$ for every $i\in\Lambda$. Also, we use the variable $x_{ij}$ to represent the interaction $\sigma_i\sigma_j\in\{\pm1\}$ for every $i\neq j$. By convention, we associate the truth-value one with the positive sign, and the truth-value zero with the negative sign. The semantic relation between truth-value assignments $\tau \in \{0,1\}^X$ and Ising model configuration $\sigma\in\{\pm 1\}^\Lambda$ is formally encoded by the below equations and Table~\ref{tab:predicate}:
\begin{equation}\label{eq:sigma2tau}
	\begin{aligned}
		&\sigma_i = 2\tau(x_i) - 1, & \mbox{for } i \in \Lambda,\\
		&\sigma_i \sigma_j = 2\tau(x_{ij}) - 1, \quad &\mbox{for } i,j \in \Lambda: i\neq j.
	\end{aligned}
\end{equation}
The first equation ensures that truth-value assignments $\tau(x_i)=1$ and $\tau(x_i)=0$ correspond exactly to spin states $\sigma_i=+1$ and $\sigma_i=-1$ respectively. The second equation ensures that truth-value assignments, $\tau(x_{ij})=1$ and $\tau(x_{ij})=0$, correspond exactly to positive and negative interactions, $\sigma_i = \sigma_j$ and $\sigma_i \neq \sigma_j$, respectively. Not every truth-value assignment $\tau\in \{0,1\}^X$ is consistent with the encoding of Eq.~(\ref{eq:sigma2tau}). Specifically, the set of truth-value assignments that satisfy the relations of Eq.~(\ref{eq:sigma2tau}) must be such that $\tau(x_{ij})=1$ exactly when $\tau(x_{i})=\tau(x_{j})$ (i.e., when the spins have equal sign $\sigma_i=\sigma_j$), and $\tau(x_{ij})=0$ otherwise. This is consistent with the signs of interactions, which are positive for equal spin states and negative otherwise. The set of assignments that is consistent with the encoding of Eq.~(\ref{eq:sigma2tau}) is exactly the set of satisfying assignments of the Boolean formula:
\begin{equation}\label{eq:phi}
	\phi \equiv \bigwedge_{i,j:i\neq j} p_{ij},
\end{equation}
with $p_{ij}=[(x_i \Leftrightarrow x_j) \Leftrightarrow x_{ij}]$. We can equivalently express $p_{ij}$ in CNF as the conjunction of the clauses $(x_{i} \vee \bar{x}_{j} \vee \bar{x}_{ij} )$, $(x_{i} \vee x_{j} \vee x_{ij} )$, $(\bar{x}_{i} \vee x_{j} \vee \bar{x}_{ij} )$, and $(\bar{x}_{i} \vee \bar{x}_{j} \vee x_{ij} )$. The correctness of this CNF encoding can be shown by first noting that a satisfying assignment of $\phi$ must satisfy all predicates $p_{ij}$. Then, from the truth-table of predicate $p_{ij}$ depicted in Table~\ref{tab:predicate}, it follows that Eq.~(\ref{eq:sigma2tau}) maps every satisfying assignment $\tau$ of $\phi$ to an Ising model configuration $\sigma$, and vice versa. Thus, the satisfying assignments of $\phi$ and the configurations of the Ising model are in one-to-one correspondence.

\begin{table}
	\caption{Truth-table of formulae $p_{ij}=[(x_i \Leftrightarrow x_j) \Leftrightarrow x_{ij}]$ and sign of $\sigma_i\sigma_j$. Note that every assignment $\tau$ such that $p_{ij}(\tau)=1$ is consistent with encoding sough after in Eq.~(\ref{eq:sigma2tau}).}
	\label{tab:predicate}
	\begin{tabular*}{\columnwidth}{@{\extracolsep{\fill}} c c c  c  c c c c}
	\hline
	\hline
	$x_i$ & $x_j$ & $x_{ij}$  & $p_{ij}$ &	$\sigma_i$ & $\sigma_j$ & $\sigma_i \sigma_j$ & Eq.~(\ref{eq:sigma2tau}) \\
	\hline
	0 & 0 & 1 & 1 & $-$ & $-$ & $+$ & True \\
	1 & 0 & 0 & 1 & + & $-$ & $-$ &  True\\
	0 & 1 & 0 & 1 &  $-$ & $+$ & $-$ & True\\
	1 & 1 & 1 & 1 & $+$ & $+$ & $+$ & True\\
	\hline 
	0 & 0 & 0 & 0 & $-$ & $-$ & $+$ & False \\
	1 & 0 & 1 & 0 & + & $-$ & $-$ & False \\
	0 & 1 & 1 & 0 & $-$ & $+$ & $-$ & False \\
	1 & 1 & 0 & 0 & $+$ & $+$ & $+$ & False\\
	\hline
	\hline
	\end{tabular*}
\end{table}

\subsubsection{Step 2 of 2: The literal-weight function}
To construct the literal-weight function $W$ that assigns weights to variables depending on the values they take, first note that the Boltzmann weight of an arbitrary configuration $\sigma$ of the Ising model can be written as
\begin{equation}\label{eq:boltzmann}
	e^{-\beta H(\sigma)} = \left(\prod_{i\in \Lambda} e^{\beta\mu h_i \sigma_i }\right) \cdot\left( \prod _{i,j\in \Lambda} e^{\beta J_{ij} \sigma_i \sigma_j }\right).
\end{equation}
We construct $W$ such that the first term in the right-hand-side of Eq.~(\ref{eq:boltzmann}) is the product of literal weights $W(x_i,\tau(x_i))$, for every $i\in\Lambda$, and the second term of the same expression is the product of literal weights $W(x_{ij},\tau(x_{ij}))$, for every $i\neq j$. Specifically, we introduce the literal-weight function:
\begin{equation}\label{eq:lwf}
    \begin{aligned}
        &W(x_{i},\tau(x_{i})) = 
		e^{\beta \mu h_i (2\tau(x_i) - 1)}, \mbox{ for } i\in \Lambda,\\
		&W(x_{ij},\tau(x_{ij})) = e^{\beta J_{ij} (2\tau(x_{ij}) - 1)}, \mbox{ for } i\neq j,
    \end{aligned}
\end{equation}
and observe that when $\phi(\tau)=1$, or equivalently when Eq.~(\ref{eq:sigma2tau}) maps $\tau$ to an Ising model configuration $\sigma$, we find that the product of literal weights, $\prod_{x\in X}W(x,\tau(x))$, is equal to the Boltzmann weight $e^{-\beta H(\sigma)}$ of the respective Ising model configuration.

\subsubsection{The partition function equals the weighted model count}
Given the Boolean formula $\phi$ in Eq~(\ref{eq:phi}), and given the literal weight function $W$ in Eq.~(\ref{eq:lwf}), we can verify the equivalence between the partition-function value $Z_{\beta}$ and the weighted model count, $W(\phi)$, by writing the latter as a summation of the form $\sum_{\tau \in R} \prod_{x \in X}W(x,\tau(x))$, where $R=\{\tau:\phi(\tau)=1\}$ is the set of satisfying assignments of $\phi$. Moreover, Eq.~(\ref{eq:sigma2tau}) establishes a one-to-one correspondence between the satisfying assignments $\tau \in R$ and the configurations $\sigma\in\{\pm1\}^{\Lambda}$ of the Ising model. Since $\prod_{x \in X}W(x,\tau(x)) = e^{-\beta H(\sigma)}$ for every satisfying assignment $\tau\in R$, we conclude that $Z_{\beta}=W(\phi)$.

Next, we review off-the-shelf weighted model counters and evaluate their performance relative to state-of-the-art solvers (developed by the physics community)
for computing Ising partition function. 

\subsection{Weighted model counters}
Increasingly, model counters are finding applications in fundamental problems of science and engineering, including computations of the normalizing constant of graphical models~\cite{chavira2007compiling}, the binding affinity between molecules~\cite{Viricel2016}, and the reliability of networks~\cite{duenas2017counting}. Thus, the development of model counters remains an active area of research that has seen significant progress over the past two decades~\cite{birnbaum1999good}.

In terms of their algorithmic approaches, exact weighted model counters can be grouped into three broad categories: direct reasoning, knowledge compilation, and dynamic programming. Solvers using direct reasoning (e.g., Cachet~\cite{sang2004combining}) exploit the structure of the CNF representation of the input formula to speedup computation. Solvers using knowledge compilation (e.g., miniC2D~\cite{OD15}) devote effort to converting the input formula to an alternate representation in which the task of counting is computationally efficient, thereby shifting complexity from counting to compilation. Solvers using dynamic programming (e.g., DPMC~\cite{dudek2020dpmc} and TensorOrder~\cite{dudek2020parallel}) exploit the clause structure of the input formula to break the counting task down to simpler computations; e.g., by using graph-decomposition algorithms in the constraint graph representation of the formula. 

In addition to exact solvers, there is a pool of approximate model counters that seeks to rigorously trade off the speed of approximate computations by an admissible error and a level of confidence specified by the user~\cite{CMV21}. The numerical experiments in the next section use representative solvers from the three groups of exact solvers we outlined earlier, as well as an approximate solver.

\subsection{Numerical experiments}
We empirically demonstrate the utility of weighted model counters in Ising model partition function computations by comparisons with the state-of-the-art approximate tool of Ref.~\cite{pan2020contracting}. There, it was shown that tensor network contraction and a matrix product state calculus outperform other strategies in computational physics to obtain machine-precision approximations of the free energy, denoted as $F$. The free energy and the partition function are related as $F=-(1/\beta)\ln Z_{\beta}$. Hereafter, we refer to the publicly available implementation of their method as CATN~\footnote{\url{https://github.com/panzhang83/catn}}. The latter can outperform various mean-field methods with a small runtime overhead, so we take CATN as a baseline for comparison in our experiments. It is worth highlighting that, unlike CATN, model counters are exact or have accuracy guarantees.

Our computational evaluation includes random regular graphs of degree three and two-dimensional $L\times L$ lattices, where $L$ is referred to as the linear size. In particular, square lattices are a standard type of model used to benchmark performance in the Ising context~\cite{pan2020contracting}. In addition, we consider random graphs because their disordered structure provides opportunities to challenge mature WMC solvers as well as physics solvers tailored to the problem.
Each experiment was run in a high performance cluster using a single 2.60 GHz core of an Intel Xeon CPU E5-2650 and 32 GB RAM. Each implementation was run once on each benchmark. An experiment is deemed to be successful when there are no insufficient memory errors and the runtime is within 1000 seconds. We provide all code, benchmarks, and detailed data from benchmark runs at {\Repository}.

The results of the successful experiments are summarized in Fig.~\ref{fig:benchmark}. Significantly, for every instance, there is an exact weighted model counter that outperforms the state-of-the-art approximate tool (CATN). In particular, TensorOrder was faster than CATN in all instances. We stress that partition function computations by model counters are exact or have accuracy guarantees. Overall, for these instances, the dynamic-programming solvers (DPMC and TensorOrder) displayed the best performance, especially for instances generated via random regular graphs. Furthermore, the ratio between CATN and TensorOrder's runtimes grows with problem size, suggesting a substantial algorithmic improvement rather than better hardware utilization or more efficient implementation. We attribute this to the maturity of dynamic-programming model counters, which not only excel at finding optimal solution strategies in structured instances, like the square lattice, but also display excellent performance for randomly generated topologies. The latter case is typically more challenging for less mature heuristics. Also, miniC2D (knowledge compilation) seems to be better suited than Cachet (direct reasoning). 

ApproxMC, the approximate model counter used, fails on all benchmarks except those of extremely small size because it requires that weighted model counting problems be encoded as unweighted problems, greatly increasing the number of variables and clauses in the formulae it operates over. However, ApproxMC will benefit from future encodings, and still offer its desirable guarantees on the error and confidence of approximations. 

Overall, in our experiments, CATN and TensorOrder were the only solvers that did not timeout. In addition, TensorOrder's exact calculations were generally faster than CATN’s approximate computations, at times an order of magnitude faster. We note that both tools are tensor network contraction-based; however, TensorOrder is a dynamic programming solver that uses state-of-the-art graph decomposition techniques to select contraction orderings. We take the better performance of TensorOrder, especially in random regular graphs, as evidence that its contraction orderings are more robust than those used by CATN.

The inverse temperature parameter in our numerical experiments ranges from 0.1 to 2.0, as we consider the same Ising model instances in the work of Pan et al.~\cite{pan2020contracting}. Typically, weighted counting solvers use fixed-point arithmetic that can lead to numerical errors due to decimal truncation. We did not detect significant numerical errors for the Ising model instances in this study; however, we expect numerical errors to be significant for large inverse temperature values of $\beta$. In the experimental settings considered, Solvers' runtimes were not found to vary with $\beta$.

We believe that the computational advantage of weighted counters over state-of-the-art tools in this study is partly due to their ability to take advantage of treelike structures of Ising model instances. Thus, the computational advantage of weighted counters will hold in graphs where the treewidth stays fixed as the number of spins increases. In contrast, for graphs where the treeewith increases as the number of spins increases, like the square lattice, weighted counters will face the same challenges of every other tool, exact or approximate.

In sum, our empirical evaluation shows that weighted model counters can vastly outperform specialized tools in the physics community in computations of the partition functions of the Ising model. Additionally, the exact nature of computations makes model counters more robust to alternatives such as CATN, which is an approximate tool that lacks guarantees of accuracy.
\begin{figure*}[hbt!]
	\centering
	\includegraphics[]{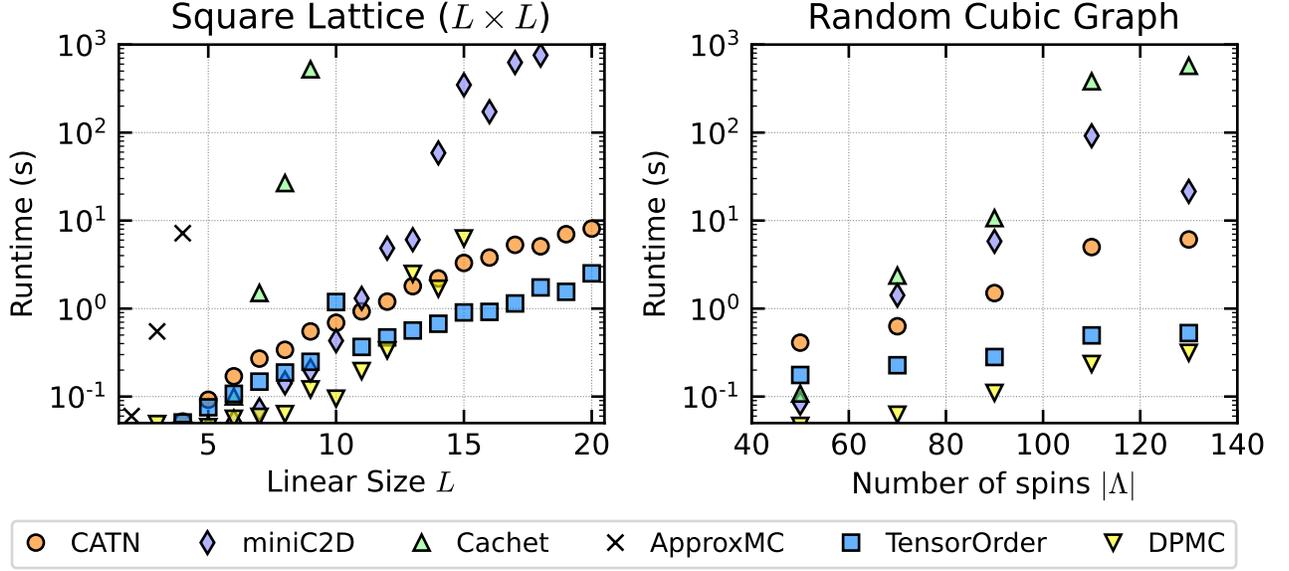}
	\caption{Running time of the partition function computation for the exact weighted model counters, miniC2D~\cite{OD15}, Cachet~\cite{sang2004combining}, TensorOrder~\cite{dudek2020parallel} and DPMC~\cite{dudek2020dpmc}; the approximate model counter, ApproxMC~\cite{CMV16-AppMC}; and the approximate reference tool from physics, CATN~\cite{pan2020contracting}. Note in particular that the ratio between CATN and TensorOrder's runtimes grows with problem size. This suggests that the difference in performance is due to a substantial algorithmic improvement rather than better hardware utilization or more efficient implementation. Left: Two-dimensional square lattice with linear size $L$. Right: Random regular graphs with degree three.}
	\label{fig:benchmark}
\end{figure*}

\section{Conclusions}
\label{sec:conc}

The principal aim of our work has been to demonstrate the utility of casting Ising-model partition-function computation to standard forms in the field of computer science (i.e., \#wCSP and WMC). Because these standard forms are well-studied, they provide immediate theoretical and computational capabilities to those studying the problem without the need for laborious, handcrafted techniques as in \cite{IsingSingleHardness} and \cite{pan2020contracting}. 

By representing partition-function computation as a w\#CSP, we were able to determine the computational complexity of computing Ising model partition functions with minimal effort. While lacking the flexibility of bespoke reductions, the w\#CSP framework serves as a very easy, out of the box way to determine the hardness of some problems for Ising models. This w\#CSP representation also provided some intuition on where the difficulty of Ising partition function computation comes from (i.e., the difficulty in relating the change of a lattice site's spin to a change in probability across configurations).

From our \#wCSP formulation, we were inspired to develop a reduction from Ising-model partition-function computation to weighted model counting (WMC), a well-studied problem for which many mature, off-the-shelf solvers exist. Some exact WMC solvers were able to out perform state-of-the-art approximate tools for computing partition functions, running in a tenth of the time despite the state-of-the-art tools being specially designed for Ising models. This can be largely attributed to the WMC solvers' maturity. Furthermore, as weighted model counters continue to improve, Ising-model partition-function computation performed using these solvers will improve in lockstep.

While we have focused primarily on the Ising model in this paper, the same type of analysis is easily done for Potts models and other related models. Our hope is that by demonstrating the utility of the w\#CSP and WMC frameworks in this setting, readers will be encouraged to apply them to other settings and problems of interest in statistical physics or probabilistic inference. 

Furthermore, we believe that weighted counters could be applied in the computation of the partition function of non classical models such as the quantum Ising model. The main justification is that computing the ground state degeneracy and density of states for both classical and quantum Hamiltonians is computationally equivalent~\cite{Brown2011}. Thus, a systematic study of the computational complexity of the partition function of non classical models and the promise of using existing weighted counters in this context is left as a promising avenue for future research.


\section{Acknowledgements}
This work was supported in part by the Big-Data Private-Cloud Research Cyberinfrastructure MRI-award funded by NSF under grant CNS-1338099, a Civil Infrastructure Systems award also funded by NSF through grant CMMI-2037545, and by Rice University's Center for Research Computing (CRC).

The authors are also grateful for discussions with members of the Quantum Algorithms Group at Rice University, including professors Kaden Hazzard and Guido Pagano (Physics and Astronomy), as well as Anastasios Kyrillidis (Computer Science).  

\appendix
\section{Reducing w\#CSPs to Weighted Model Counting}
\label{appendix:wcsp2wmc}

Every instance of a weighted counting constraint-satisfaction problem (w\#CSP, Def.~\ref{restatdef:wcspInstanceDef}) can be reduced to an equivalent instance of weighted model counting (WMC, Def.~\ref{def:wmc}) expressed in conjunctive normal form (CNF, Def.~\ref{def:cnf}). One particular reduction for computing the partition function of the Ising model is given in Section~\ref{sec:wmc}. In this appendix, we give a more general reduction from an aribtrary w\#CSP to WMC:

\begin{lemma}
For every constraint language $\F$, there is a polynomial-time reduction reduction from $\#CSP(\F)$ to WMC.
\end{lemma}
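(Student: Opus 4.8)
The plan is to give an explicit gadget-based reduction and then verify it preserves the answer and runs in polynomial time. Fix the finite constraint language $\F$ over domain $D$; since $\F$ is fixed, $|D|$, the arities $k$ of functions in $\F$, and $|\F|$ are all constants. Given an instance $(I,n)$ of $\#CSP(\F)$, I would first encode each $D$-valued variable $x_i$ by a block of $|D|$ fresh Boolean variables $x_i^{d}$, one for each $d \in D$, under a \emph{one-hot} convention in which $x_i^{d}$ is true exactly when $x_i$ takes the value $d$. This is enforced by the standard ``exactly-one'' clauses: the clause $\bigvee_{d \in D} x_i^{d}$ together with the clauses $\bar{x}_i^{d} \vee \bar{x}_i^{d'}$ for all distinct $d, d' \in D$. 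After this step the satisfying assignments of the partial formula are in bijection with $D^n$, the space over which $Z(I)$ sums.

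The second step encodes the constraint occurrences and their weights. A literal-weight function can only assign \emph{local, multiplicative} weights to individual Boolean variables, whereas a constraint $F\colon D^k \to \R$ assigns an arbitrary real number (possibly negative or zero) to each $k$-tuple of domain values; the device that bridges this gap is a \emph{selector variable}. For each formula $F(x_{i_1}, \dots, x_{i_k}) \in I$ and each tuple $\boldsymbol{d} = (d_1, \dots, d_k) \in D^k$, I introduce a fresh Boolean variable $y_{F,\boldsymbol{d}}$ and add the CNF clauses expressing $y_{F,\boldsymbol{d}} \Leftrightarrow (x_{i_1}^{d_1} \wedge \cdots \wedge x_{i_k}^{d_k})$, namely the single clause $\bar{x}_{i_1}^{d_1} \vee \cdots \vee \bar{x}_{i_k}^{d_k} \vee y_{F,\boldsymbol{d}}$ for the forward direction and the $k$ clauses $\bar{y}_{F,\boldsymbol{d}} \vee x_{i_j}^{d_j}$ for the reverse direction. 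Given any one-hot assignment to the $x$-blocks, each $y_{F,\boldsymbol{d}}$ is thereby \emph{forced} to a unique truth value, and for each constraint occurrence exactly one of its selectors — the one matching the chosen tuple — is true. I then set the literal-weight function $W$ by $W(x_i^{d}, 0) = W(x_i^{d}, 1) = 1$ for all spin-encoding variables, $W(y_{F,\boldsymbol{d}}, 0) = 1$, and $W(y_{F,\boldsymbol{d}}, 1) = F(\boldsymbol{d})$.

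Correctness then follows by a direct computation: each satisfying assignment $\tau$ of the full CNF formula $\phi$ restricts to a one-hot assignment corresponding to a unique $\boldsymbol{y} \in D^n$, the $y$-variables are determined by $\boldsymbol{y}$, and $\prod_{x} W(x,\tau(x))$ collapses to $\prod_{F(x_{i_1},\dots,x_{i_k}) \in I} F(y[i_1], \dots, y[i_k]) = F_I(\boldsymbol{y})$, so that $W(\phi) = \sum_{\boldsymbol{y} \in D^n} F_I(\boldsymbol{y}) = Z(I)$. Repeated variables within a single constraint cause no difficulty: the biconditional clauses simply force $y_{F,\boldsymbol{d}}$ to be false whenever $\boldsymbol{d}$ is inconsistent with equal occurrences, and such selectors contribute weight $1$. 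For the size bound, the formula has $n|D| + \sum_{F(x_{i_1},\dots,x_{i_k}) \in I} |D|^{k}$ Boolean variables and $O\!\left(n|D|^2 + \sum_{F(x_{i_1},\dots,x_{i_k}) \in I}|D|^{k}k\right)$ clauses, polynomial in the instance size because $|D|$ and the arities are constants; the weight function is computed by evaluating each $F \in \F$ on each of finitely many tuples.

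I expect the main obstacle to be conceptual rather than technical: recognizing that the only real gap between w\#CSP and WMC is that WMC weights are per-variable and multiplicative, so one must ``spend'' auxiliary variables to reify each (constraint, tuple) pair and attach the weight $F(\boldsymbol{d})$ there. Once the selector gadget is in place, correctness and the polynomial-time bound are routine bookkeeping. If one wants the fully complex-valued version of w\#CSP, the same reduction works verbatim against the complex-weighted generalization of WMC; for the real case stated here nothing further is needed.
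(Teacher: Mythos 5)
Your reduction is correct and takes essentially the same approach as the paper's: both reify each (constraint occurrence, tuple) pair as a fresh Boolean selector variable whose positive literal carries weight $F(\boldsymbol{d})$, give unit weights to the value-encoding variables, and conclude $W(\phi)=Z(I)$ from the bijection between satisfying assignments and $D^n$. The only differences are cosmetic: you encode domain values in one-hot form with exactly-one clauses and enforce selectors via full biconditionals, whereas the paper uses a binary $\lceil\log_2|D|\rceil$-bit encoding with a range-restriction formula and one-sided implications plus an at-least-one clause, and both variants are polynomial under the same bounded-arity caveat the paper itself records.
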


\begin{proof}
Consider a constraint language $\F$ with a domain $D= \{0, 1, \cdots, \abs{D}-1\}$, and the associated problem $\#CSP(\F)$. Consider an instance $(I, n)$ over the variables $x_1, \cdots, x_n$. Let $\overline{D} = \lceil \log_{2}(\abs{D}) \rceil$. We reduce $(I,n)$ to a WMC instance as follows:

We first define the set $X$ of Boolean variables in the WMC instance. 
\begin{itemize}
    \item $x_{j,1}, \cdots, x_{j,\overline{D}}$ - For each of the $n$ variables $x_j$ in the input instance, we add $\overline{D}$ corresponding Boolean variables $x_{j,1}, \cdots, x_{j,\overline{D}}$ to $X$. These variables encode the value of each $x_j$. In particular, the value of $x_j$ corresponds to the binary number that results from concatenating the values of Boolean variables: $x_{j,\overline{D}} \cdots x_{j,1}$.
    \item $x_{(i_1,\cdots,i_k), \boldsymbol{d}}$ - For each formula $F(x_{i_1},\cdots,x_{i_k})$ in $I$, we add $\abs{D}^k$ Boolean variables $x_{(i_1,\cdots,i_k), \boldsymbol{d}}$, one for each $\boldsymbol{d} \in D^k$. These variables encode 
    the value of the input variables $(x_{i_1},\cdots,x_{i_k})$ of each formula $F(x_{i_1},\cdots,x_{i_k})$ in $I$.
\end{itemize}
We now introduce the following formulas:
\begin{itemize}
    \item First, we must restrict the allowed values of the variables $x_{j,1}, \cdots, x_{j,\overline{D}}$ so that the value of $x_j$ is in $D$.  Specifically, we want to disallow the binary number $x_{j,\overline{D}} \cdots x_{j,2}x_{j,1}$ from exceeding $\abs{D}-1$. The most straightforward way to express this condition is that $x_{j,\overline{D}} \cdots x_{j,2}x_{j,1}$ should not be greater than the binary representation of $\abs{D}-1$, written $B_{\overline{D}} \cdots B_2B_1$, with respect to the following strict lexicographic order ($<$) on binary strings of equal length:
    \begin{itemize}
        \item[$\blacksquare$] $0 < 1$
        \item[$\blacksquare$] $0\omega_a < 1\omega_b$ for arbitrary binary strings $\omega_a$ and $\omega_b$ of equal length.
        \item[$\blacksquare$] $0\omega_a < 0\omega_b$ and $1\omega_a < 1\omega_b$ if $\omega_a < \omega_b$ for arbitrary binary strings $\omega_a$ and $\omega_b$ of equal length.
    \end{itemize}
    For $a,b \in \{0,1\}$ and binary strings of equal length $\omega_a$ and $\omega_b$, we have that $a\omega_a < b\omega_b$ if $a < b$ or if $a = b$ and $\omega_a < \omega_b$. We now construct inductively a formula $\phi(a_m \cdots a_1, b_m \cdots b_1)$ that checks whether $a_m \cdots a_1 < b_m \cdots b_1$, defined as follows: 
    \begin{align}
        \phi&(a_m \cdots a_1, b_m \cdots b_1)\\
        &= \begin{cases}
        (a_m < b_m) & m > 1 \\\quad \lor \left((a_m = b_m) \land \phi(a_{m-1} \cdots a_1, b_{m-1} \cdots b_1)\right) \\
        (a_m < b_m)  & m = 1\\
        \end{cases}\\
        &= (a_m < b_m)\\
            &\quad \lor ((a_m = b_m) \land (a_{m-1} < b_{m-1})\nonumber\\
            &\quad \lor ((a_m = b_m) \land (a_{m-1} = b_{m-1}) \land (a_{m-2} < b_{m-2}))\nonumber\\
            &\quad \lor \cdots\nonumber\\
        &= \bigvee\limits_{p \in [m]} \left(\left(\bigwedge\limits_{q \in [m], q > p} (a_q = b_q)\right) \land (a_p < b_p) \right)\\
        &= \bigvee\limits_{p \in [m], a_p = 0} \left(\left(\bigwedge\limits_{q \in [m], q > p} (a_q = b_q)\right) \land (a_p < b_p) \right)\displaybreak\\
        &= \bigvee\limits_{p \in [m], a_p = 0} \left(\left(\bigwedge\limits_{q \in [m], q > p} (a_q = b_q)\right) \land (b_p = 1) \right).
    \end{align}
    The last two simplifications come from the fact that $a_p < b_p$ can only be true if $a_p = 0$ and $b_p = 1$.
    
    Next, we construct a Boolean formula to capture $\neg \phi(B_{\overline{D}} \cdots B_2B_1, x_{j,\overline{D}} \cdots x_{j,2}x_{j,1})$. 
    
    Now define
        \begin{equation}
            \phi_j = \neg \phi\left(B_{\overline{D}} \cdots B_2B_1, x_{j,\overline{D}} \cdots x_{j,2}x_{j,1}\right),
        \end{equation}
    and note that terms in $\phi_j$ checking equality can be replaced with positive and negative literals over $x_{j,\overline{D}}, \cdots, x_{j,2}, x_{j,1}$ determined by the values of $B_{\overline{D}}, \cdots, B_2, B_1$. As the negation of a DNF formula, $\phi_j$ is a CNF formula.
    
    We now have a formula $\phi_j$ that is satisfied exactly when $x_{j,\overline{D}} \cdots x_{j,2}x_{j,1} \leq \abs{D}-1$.
    \item Next, we want to relate the variables that encode the inputs to the constraint functions (i.e. $x_{(i_1,\cdots,i_k), \boldsymbol{d}}$) to the variables encoding the value of $x_j$ (i.e. $x_{j,1}, \cdots, x_{j,\overline{D}}$). To do this, we introduce the following shorthand:
    \begin{align}
        \phi_{j,d} = \bigwedge\limits_{q \in [\overline{D}]} l_{j,q,d}
    \end{align}
    where $d \in D$ with the binary representation $d_{\overline{D}} \cdots d_1$, and $l_{j,q,d} = \begin{cases} x_{j,q} & d_q = 1\\ \neg x_{j,q} & d_q = 0\end{cases}$.
    
    Evidently, $\phi_{j,d}$ is true exactly when $x_j = d$. We also introduce the following formulae:
        \begin{align}
        \psi_{F(x_{i_1},\cdots,x_{i_k})} &= \bigwedge\limits_{\boldsymbol{d} \in D^k} \left(x_{(i_1,\cdots,i_k), \boldsymbol{d}} \implies \bigwedge\limits_{p \in [k]} \phi_{i_p,\boldsymbol{d}_p}\right)\\
        &= \bigwedge\limits_{\boldsymbol{d} \in D^k} \left(\bigwedge\limits_{p \in [k]} \left(\neg x_{(i_1,\cdots,i_k), \boldsymbol{d}} \lor \phi_{i_p,\boldsymbol{d}_p}\right)\right)\\
        &= \bigwedge\limits_{\boldsymbol{d} \in D^k,p \in [k]} \left(\neg x_{(i_1,\cdots,i_k), \boldsymbol{d}} \lor \bigwedge\limits_{q \in [\overline{D}]} l_{j,q,\boldsymbol{d}_p}\right)\displaybreak\\
        &= \bigwedge\limits_{\boldsymbol{d} \in D^k,p \in [k]} \left(\bigwedge\limits_{q \in [\overline{D}]} \left(\neg x_{(i_1,\cdots,i_k), \boldsymbol{d}} \lor l_{j,q,\boldsymbol{d}_p}\right)\right)\\
        &= \bigwedge\limits_{\boldsymbol{d} \in D^k,p \in [k], q \in [\overline{D}]} \left(\neg x_{(i_1,\cdots,i_k), \boldsymbol{d}} \lor l_{j,q,\boldsymbol{d}_p}\right)\\
        &\mbox{ and}\nonumber 
        \\\gamma_{F(x_{i_1},\cdots,x_{i_k})} &= \bigvee\limits_{\boldsymbol{d} \in D^k} x_{(i_1,\cdots,i_k), \boldsymbol{d}}
    \end{align}
    which together force $x_{(i_1,\cdots,i_k), \boldsymbol{d}}$ to be true exactly when the value assigned to $(x_{i_1},\cdots,x_{i_k})$ is $\boldsymbol{d}$. Specifically, $\psi_{F(x_{i_1},\cdots,x_{i_k})}$ ensures that if $x_{(i_1,\cdots,i_k), \boldsymbol{d}}$ is true then the value assigned to $(x_{i_1},\cdots,x_{i_k})$ is $\boldsymbol{d}$. Since the value assigned $(x_{i_1},\cdots,x_{i_k})$ will be equal to exactly one $\boldsymbol{d} \in D^k$, $x_{(i_1,\cdots,i_k), \boldsymbol{d}}$ can be true for at most one $\boldsymbol{d} \in D^k$. Moreover, from $\gamma_{F(x_{i_1},\cdots,x_{i_k})}$, we have that $x_{(i_1,\cdots,i_k), \boldsymbol{d}}$ is true for at least one $\boldsymbol{d} \in D^k$. Thus, $x_{(i_1,\cdots,i_k), \boldsymbol{d}}$ will be true exactly when the value assigned to $(x_{i_1},\cdots,x_{i_k})$ is $\boldsymbol{d}$.
\end{itemize}
All together, we write 
\begin{align}
    \Phi &= \left(\bigwedge\limits_{j \in [n]} \phi_{j}\right)\nonumber\\ &\land \bigwedge\limits_{F(x_{i_1},\cdots,x_{i_k}) \in I} \left(\psi_{F(x_{i_1},\cdots,x_{i_k})} \land \gamma_{F(x_{i_1},\cdots,x_{i_k})}\right).
\end{align}
This formula $\Phi$ is the formula whose weighted count our WMC reduction of the $(I,n)$ instance will compute.

Finally, we need a weight function $W$. For each $x_{j,q}$, $W(x_{j,q}, 1) = W(x_{j,q}, 0) = 1$. For each $x_{(i_1,\cdots,i_k), \boldsymbol{d}}$, $W(x_{(i_1,\cdots,i_k), \boldsymbol{d}}, 0) = 1$ and $W(x_{(i_1,\cdots,i_k), \boldsymbol{d}}, 1) = F(\boldsymbol{d})$.

All together, $X$, $\Phi$, and $W$ constitute the WMC instance produced by the reduction.
Correctness of the reduction follows by construction. The variables $x_{j,\overline{D}}, \cdots, x_{j,1}$ give a binary encoding of the value of $x_j$, thanks to $\phi_j$. For each formula $F(x_{i_1},\cdots,x_{i_k}) \in I$, the literals $x_{(i_1,\cdots,i_k), \boldsymbol{d}}$ indicate the value of the arguments of $F$, and the weight assigned to each literal $x_{(i_1,\cdots,i_k), \boldsymbol{d}}$ is exactly the value of $F(\boldsymbol{d})$. The total weight of an assignment to the variables of $\Phi$ is a product of these weights, which is precisely the product of the values taken by the formulas in $I$ for the corresponding assignment to $x_1, \cdots, x_n$. So the weight $W$ on each assignment over the Boolean variables is equal to the value of $F_I$ on the corresponding assignment to $x_1, \cdots, x_n$. Since there is a bijection between assignments to the Boolean variables satisfying $\Phi$ and assignments to $x_1, \cdots, x_n$, we can show $W(\Phi) = Z(I)$. Thus the WMC instance that the reduction produces is equivalent to the $\#CSP(\F)$ instance $(I,n)$ as desired.
\end{proof}

Note that we have $\overline{D} * n$ variables of the form $x_{j, q}$ and at most $\abs{D}^K * \abs{I}$ variables of the form $x_{(i_1,\cdots,i_k), \boldsymbol{d}}$, where $K$ is the maximum arity of the constraints used in $I$. There are $n$ formulae of the form $\phi_j$ in $\Phi$, each with $\overline{D}$ literals. There are $\abs{I}$ formulae of the form $\psi_{F(x_{i_1},\cdots,x_{i_k})}$ in $\Phi$, each with at most $\abs{D}^K * K*\overline{D}$ CNF clauses, each clause having $2$ literals. Finally, there are $\abs{I}$ formulas of the form $\gamma_{F(x_{i_1},\cdots,x_{i_k})}$, each with at most $\abs{D}^K$ literals. The reduction of $(I,n)$ is exponential only in $K$, the maximum arity of the functions appearing in $I$. 

If the arities of the constraints comprising $\F$ are bounded, then the reduction is polynomial. Any finite constraint language $\F$ has bounded arity, but many useful infinite constraint languages have bounded arity as well. In the case of $\#CSP(\text{Ising})$, the functions in the associated (infinite) constraint language have arity at most $2$.

For many particular problems, a more compact reduction can be found. For example, when there is symmetry in the constraints $F$ in $\F$ (i.e., when the constraints are not injective), more efficient encodings of the constraints' inputs can be used.
\section{Hardness and Relationship to Weighted Constraint Satisfaction}
\label{appendix:dichot}

Recall the discussion from Section~\ref{sec:theory:dichot}. We consider there a problem instance $(I,n)$ and the associated instance function $F_I$. For each $t < n$, we partition the partial assignments $D^{t-1}$ into equivalence classes $S_{[t,j]}$ based on the contributions of each partial assignment to $Z(I)$. We take $m_t$ to be the number of these equivalence classes for each $t$. The discussion then proceeds with an analysis of these objects to determine conditions under which a problem is polynomially solvable. Section~\ref{sec:theory:dichot} introduces the Block-Orthogonality Condition. Here we review the remaining two conditions: the Mal'tsev Condition and the Type-Partition Condition.

The Block-Orthogonality Condition given in Section~\ref{sec:theory:dichot} gives us that $m_t$ is reasonably small. This means that the speedup we get from using the equivalence relation will be substantial, provided we are able to easily identify equivalent partial assignments.

To make this identification, we require that membership in $S_{[t,j]}$ is computable in polynomial time. This requirement is guaranteed by the \emph{Mal'tsev Condition}. In particular, the Mal'tsev Condition, by requiring the existence of Mal'tsev polymorphisms (defined below) for each $S_{[t,j]}$, guarantees the existence of a witness function for each $S_{[t,j]}$ whose evaluation time is linear in $t$. For our purposes, a witness function for a set is a function that verifies whether a given input is an element of that set. Further details about such witness functions and their construction is given in~\cite{CaiComplex}.

Understanding the Mal'tsev Condition requires us to define polymorphisms and Mal'tsev polymorphisms. 

\begin{definition} [(Cubic) Polymorphism]
A cubic polymorphism (polymorphism, for short) of $\Phi \subset D^t$ is a function $\phi:D^3 \rightarrow D$ such that, for all $\boldsymbol{u}, \boldsymbol{v}, \boldsymbol{w} \in \Phi$, $(\phi({u}_1, {v}_1, {w}_1), \cdots, \phi({u}_t, {v}_t, {w}_t)) \in \Phi$.
\end{definition} 
\begin{definition} [Mal'tsev Polymorphism]
A Mal'tsev polymorphism of $\Phi \subset D^t$ is a polymorphism $\phi:D^3 \rightarrow D$ of $\Phi$ such that, for all $a,b \in D$, we have that $\phi(a,a,b) = \phi(b,a,a) = a$.
\end{definition} 
\begin{definition} [Mal'tsev Condition] \label{condition:maltsev} We say that a constraint language $\F$ satisfies the Malt'sev Condition if, for every instance function $F_I$ of $\#CSP(\F)$, 
all the equivalence classes $S_{[t,j]}$ associated with $F_I$ share a common Mal'tsev polymorphism.
\end{definition}
    
With the Mal'tsev Condition, we know that there exists a tractable witness function for each $S_{[t,j]}$. Given sufficient information about each $S_{[t,j]}$, we might hope to compute such witness functions. However, we will in general not know $m_t$, nor will we know much about each $S_{[t,j]}$. The existence of a \textit{shared} Mal'tsev polymorphism, along with the \emph{Type-Partition Condition} below, allows us to overcome this lack of information. When these conditions are satisfied, we can determine each $m_t$ and construct witness functions for each $S_{[t,j]}$ without requiring prior knowledge or construction of each $S_{[t,j]}$.

We present the Type-Partition Condition below. A complete coverage can be found in~\cite{CaiComplex}.

\begin{definition} [Prefix]
Let $n,m \in \Z_+$ with $n \geq m$ be given. Let $\boldsymbol{v} = (v_1, \cdots, v_n) \in \R^n$ be given. We say a vector $\boldsymbol{w} \in \R^m$ is a prefix of $\boldsymbol{v}$ if, for all $j \leq m$, $w_j = v_j$. Thus $\boldsymbol{v} = (w_1, \cdots, w_m, v_{m+1}, \cdots, v_n)$, so $\boldsymbol{w}$ appears as a prefix of $\boldsymbol{v}$.
\end{definition}

\begin{definition} [$type_F$]
Let a function $F: D^n \rightarrow \C$ with $n \geq 2$ be given. Let $P([m])$ denote the power set of $\{1, \cdots, m\}$. We define the map $type_F: \bigcup\limits_{k \in [n]} D^k \rightarrow P([m])$ so that for each $\boldsymbol{y} \in \bigcup\limits_{k \in [n]} D^k$, we have $type_F(\boldsymbol{y}) = \{j \in [m]: \boldsymbol{y} \text{ is a prefix of an element of } S_{[n,j]}\}$.
\end{definition}

\begin{definition} [Type-Partition Condition] \label{condition:type-partition}
We say that a constraint language $\F$ satisfies the Type-Partition Condition if, for all instance functions 
$F_I$ in $\#CSP(\F)$ with arity $n \geq 2$, for all $t \in [n]$, $l \in [t-1]$, and $\boldsymbol{y}, \boldsymbol{z} \in D^l$, we have that the sets $type_{F^{[t]}_I}(\boldsymbol{y})$ and $type_{F^{[t]}_I}(\boldsymbol{z})$ are either equal or disjoint. 
\end{definition}

All together, the criteria for the dichotomy theorem give us the ability to make real use of our equivalence relation $\equiv_t$. The Block-Orthogonality Condition guarantees that we do not have too many equivalence classes. The Mal'tsev and Type-Partition Conditions together let us construct witness functions that we use to identify which equivalence class a given partial assignment belongs to. With a guarantee that our equivalence relation is useful and the ability to determine elements' equivalence classes, we are able to use the methodology described in Section~\ref{sec:theory:dichot} to determine $Z(I)$ in polynomial time.

This provides some intuition that the dichotomy theorem criteria are sufficient for a problem to be in $FP$. For a proof that w\#CSPs that do not obey the dichotomy theorem criteria are \#P-hard, refer to~\cite{CaiComplex}.

\bibliography{main.bib}
\end{document}